\DeclareMathOperator*{\argmin}{arg\,min}
\def\eps{\varepsilon}
\def\lam{\lambda}
\def\wt{\widetilde}
\newcommand{\R}{{\mathbb R}}
\newcommand{\N}{{\mathbb N}}
\newcommand{\cmark}{\ding{51}}
\newcommand{\xmark}{\ding{55}}
\newcommand{\bd}{\begin{displaymath}}
\newcommand{\ed}{\end{displaymath}}
\newcommand{\be}{\begin{equation}}
\newcommand{\ee}{\end{equation}}
\newcommand{\bq}{\begin{eqnarray}}
\newcommand{\eq}{\end{eqnarray}}
\newcommand{\bn}{\begin{eqnarray*}}
\newcommand{\en}{\end{eqnarray*}}
\newcommand{\dl}{\delta}
\newtheorem{theorem}{Theorem}[section]
\newtheorem{convention}[theorem]{Convention}
\newtheorem{remark}[theorem]{Remark}
\newtheorem{definition}[theorem]{Definition}
\newtheorem{assumption}[theorem]{Assumption}
\numberwithin{equation}{section}
\date{\today}
\title{Nonparametric Estimation of Self- and Cross-Impact}
\author[1]{Natascha Hey\thanks{NH is supported by the Briger Family Digital Finance Lab at the Columbia Business School.}}
\author[2]{Eyal Neuman}
\author[2]{Sturmius Tuschmann\thanks{ST is supported by the EPSRC Centre for Doctoral Training in Mathematics of Random \mbox{Systems}: Analysis, Modelling and Simulation (EP/S023925/1).}}
\affil[1]{Graduate School of Business, Columbia University}
\affil[2]{Department of Mathematics, Imperial College London}
\begin{document}
\maketitle

\begin{abstract} 
We introduce an offline nonparametric estimator for concave multi-asset propagator models based on a dataset of correlated price trajectories and metaorders. Compared to parametric models, our framework avoids parameter explosion in the multi-asset case  and yields confidence bounds for the estimator. We implement the estimator using both proprietary metaorder data from Capital Fund Management (CFM) and publicly available S\&P order flow data, where we augment the former dataset using a metaorder proxy. In particular, we provide unbiased evidence that self-impact is concave and exhibits a shifted power-law decay, and show that the metaorder proxy stabilizes the calibration. Moreover, we find that introducing cross-impact provides a significant gain in explanatory power, with concave specifications outperforming linear ones, suggesting that the square-root law extends to cross-impact.  We also measure asymmetric cross-impact between assets driven by relative liquidity differences. Finally, we demonstrate that a shape-constrained projection of the nonparametric kernel not only ensures interpretability but also  slightly outperforms established parametric models in terms of predictive accuracy.
\end{abstract} 

\begin{description}
\item[Mathematics Subject Classification (2020):]  62G08, 62L05, 91G80
\item[Keywords:] market impact, cross-impact, concave price impact, nonparametric estimation, metaorder, order flow imbalance
\end{description}


\section{Introduction}\label{sec:introduction}
Price impact refers to the empirical observation that executing a large order adversely affects the price of a risky asset in a persistent manner, resulting in less favorable execution prices. It is well documented that price impact is concave with respect to trade size: larger trades tend to move prices less per unit volume than smaller trades, a property not captured by linear models. Instead, the empirical literature supports a square-root law of market impact \citep{Almgren2005,Bershova2013,Mastromatteo2014,sato2024,toth2011anomalous}, where the peak impact induced by a large \emph{metaorder} of size $Q$ is given by
\begin{equation}\label{eq:peak}
    I^{\text{peak}} = Y \sigma_D \,\text{sign}(Q)\left| \frac{Q}{V_D} \right|^{\delta},
\end{equation}
where $Y$ is a constant of order one, $V_D$ is the total daily traded volume, and $\sigma_D$ is the daily volatility. The square-root law states that empirically the exponent $\delta$ is well approximated as $0.5$. It has been shown to hold in various markets (such as equities, foreign exchange, options, and even cryptocurrencies), and is general enough to encompass different types of market liquidity, broad classes of execution strategies, and a range of trading frequencies \citep{Bershova2013,Donier_15,hey2023concave,toth2016squarerootimpactlawholds,Toth_17,Zarinelli_15}.

While the square-root law provides a simple connection between the metaorder size and its impact on the mid-price, price impact evolves dynamically, and the reaction of the mid-price to the metaorder is mostly transient. Consequently, an agent seeking to liquidate a large order must split it into smaller parts, referred to as \emph{child orders}, which are typically executed over a period of hours or days. Propagator models are a central tool for mathematically describing this decay phenomenon. They express price moves in terms of the influence of past trades and therefore capture the decay of price impact after each trade \citep{bouchaud_bonart_donier_gould_2018,Gatheral2010}. For a metaorder split into child orders $\{Q_{t_j}\}_{j=0}^{M-1}$ across $M$ time intervals, the price process $P$ follows the dynamics
\begin{equation}\label{eq:prop1D}
    P_{t_{i+1}} - P_{t_0} = \sum_{j=0}^i G_{i,j} h_c(Q_{t_j}) + \epsilon_{t_i}, \quad i = 0,\dots,M-1,
\end{equation}
where $ h_c(x) := \text{sgn}(x)|x|^c $ with $c\in (0,1]$ is a concave impact function and $(\epsilon_{t_i})_{i=0}^{M-1}$ represents the noise. Here $G=(G_{i,j})_{i,j=0}^{M-1}$ is called the propagator, which is typically decreasing in the lag $i-j$, reflecting the decaying effect of price impact. \citet{obizhaeva2013optimal} and \citet{garleanu2016dynamic}, along with follow-up papers, assume that the price impact decays exponentially over time, that is, $G_{i,j} =\kappa e^{-\rho(t_i-t_j)} $ for some positive constants $\kappa$ and $\rho$.  
On the other hand, \citet{bouchaud_bonart_donier_gould_2018} (see Chapter 13.2.1 and references therein) report on empirical observations that the propagator $G$ exhibits power-law decay in the lag, that is,  
\begin{equation}\label{power-prop} 
G_{i,j} \approx (t_i-t_j)^{-\beta} , \quad 0\leq j < i \leq M-1,
\end{equation}
where \( 0<\beta < 1 \), and these results are also supported by theoretical arguments. A well-known example à la Almgren and Chriss addresses the case where all entries of $G$ are identical, then the sum in \eqref{eq:prop1D} represents permanent price impact. If $G = \lam \mathbb{I}_M$, where $\lam>0$ and $\mathbb{I}_M\in\mathbb{R}^{M \times M}$ is the identity matrix, the sum in \eqref{eq:prop1D} represents temporary price impact (see \citep{AlmgrenChriss1,OPTEXECAC00}).

Cross-impact models provide an additional explanation for the price dynamics of a risky asset in terms of the influence of past trades of other assets in the market. Throughout this paper, we will adopt the convention introduced in Section 14.5.3 of \cite{bouchaud_bonart_donier_gould_2018} and refer to price impact as the aggregated effects of self-impact and cross-impact. When there are $d\geq 2$ assets in the market with prices denoted by $P=(P^1,\ldots,P^d)$, the evolution of the returns is given by, 
\begin{equation}\label{eq:propND}
    P^{\ell}_{t_{i+1}} - P^{\ell}_{t_0} = \sum_{k=1}^d \sum_{j=0}^i G^{(\ell,k)}_{i,j} h_{c_{(\ell,k)}}(Q^k_{t_j}) + \epsilon^{\ell}_{t_i}, \quad i = 0,\dots,M-1, \ \ell =1,\dots ,d, 
\end{equation}
where $(Q^k_{t_j})_{j=0}^{M-1}$ are the traded volumes in asset $k \in \{1,\ldots,d\}$ at each time interval, and $(\smash{G^{(\ell,k)}_{i,j}})$ are propagators that quantify the self-impact of asset $\ell$ and the cross-impact of all assets $k\neq\ell$ on the asset $\ell$. The functions $h_{c_{(\ell,k)}}$ are from the same class as in \eqref{eq:prop1D}, and can have different concavity parameters $c_{(\ell,k)}$ for self- and cross-impact (see \citep{benzaquen2017dissecting,bouchaud_bonart_donier_gould_2018, hey2024cross,coz2023cross, Mastromatteo2017,schneider2019cross,tomas2022build} for various versions of this model). \citet{bouchaud_bonart_donier_gould_2018} (see Chapter 14.5.3) report on two main empirical observations regarding cross-impact: 
\begin{itemize}
    \item[(i)] diagonal and off-diagonal elements of the propagator matrix $(\smash{G^{(\ell,k)}})$ exhibit a power-law decay in the lag, that is, $G^{(\ell,k)}(t,s) \approx (t-s)^{-\beta_{\ell k}}$, with $\beta_{\ell k} \in (0,1)$. 
\item[(ii)] most cross-correlations between price moves ($\approx 60-90\%$, depending on the timescale), are mediated by trades themselves, that is, through a cross-impact mechanism, rather than through the cross-correlation of noise terms, which are not directly related to trading.  
\end{itemize}
The controversy surrounding the shape of the propagator has raised challenging questions regarding statistical estimation methods. Several estimators for the price impact kernel in the convolution case (i.e., where $G_{i,j} = G_{i-j}$) have been proposed in \citep{benzaquen2017dissecting, Bouchaud2004, Forde:2022aa, Toth_17} and in Chapter~13.2 of \cite{bouchaud_bonart_donier_gould_2018}. These regression-based estimation methods are already common practice in the industry, but they overlook several mathematical issues, such as the ill-posedness of the least-squares estimation problem, dependencies between price trajectories, and error margins. A rigorous nonparametric estimation method for the price impact kernel in \eqref{eq:prop1D} in the linear case (i.e., $c=1$) was introduced in \cite{neuman2023statisticallearningsublinearregret}, where estimation is carried out in an online learning framework. A different approach using offline nonparametric estimation (i.e., using historical metaorder data) was developed in \cite{neuman2023offline} for the linear single-asset propagator model, and subsequently tested on data in \cite{veldman2024market}.

The main objective of this paper is to extend the nonparametric method from  \cite{neuman2023offline} to the concave multi-asset impact model in \eqref{eq:propND}, and to implement the estimator using both proprietary metaorder data by Capital Fund Management (CFM) and public order flow data.
By incorporating the empirically supported square-root law, this extension significantly improves the goodness-of-fit in kernel estimation. Moreover, adding cross-impact between assets leads to an additional improvement, which is however of smaller magnitude than the one due to concavity.

A central theoretical challenge when moving from linear to concave impact models is to study the absence of price manipulation. That is, the overall costs due to price impact should always be nonnegative for any admissible trading strategy \citep{curato2017optimal,Gatheral2010,gatheral2011exponential,gatheral.al.12,hey2023concave,huberman2004price}. For the linear multi-asset setting, sufficient conditions for convolution kernels that preclude price manipulation have been derived in both the discrete-time \cite{alfonsi2016multivariate} and the continuous-time case \cite{jaber2024optimalportfoliochoicecrossimpact}. For the concave case, already for one asset, a large class of continuous-time models with a nonlinear impact function $h$ and a nonsingular propagator $G$ admits price manipulation (see Proposition~1 of \cite{gatheral2011exponential}). Moreover, even for singular kernels such as the power-law propagator, examples of price manipulation can be constructed \cite{curato2017optimal,Gatheral2010}. To complement those findings, we extend Proposition~1 of \cite{gatheral2011exponential} to the discrete-time setting from \eqref{eq:prop1D}, demonstrating that price manipulation cannot be ruled out for nonsingular convolution kernels and nonlinear impact functions in the discrete-time case either (see Theorem~\ref{thm:price-manipulation}). As our result applies only to nonsingular kernels, we nevertheless adopt the concave model in our empirical analysis, mainly motivated by its superior goodness-of-fit.

One of the main practical challenges in applying the estimator to proprietary metaorder data lies in data scarcity. Institutional investors often execute only a single metaorder per day per asset, meaning that even long historical datasets rarely contain more than $\num{1000}$ metaorders per asset. This limited sample size is insufficient to reliably estimate high-dimensional impact kernels, especially when accounting for liquidity- or tick-size-specific effects. To address this, a proprietary dataset from CFM is enhanced using the synthetic metaorder generation procedure introduced in \cite{maitrier2025}, which is inspired by empirical findings in \cite{maitrier2025doublesquarerootlawevidence,sato2024}. These synthetic metaorders are constructed by randomly sampling from tick-by-tick data and assigning trades to proxy metaorders via a systematic prescription described in Section~\ref{sec:meta-proxy}. 

In our empirical analysis, the propagator kernel is estimated nonparametrically using two metaorder datasets: CFM's proprietary corn futures trades over a ten-year horizon and an enhanced version where we augment the dataset with synthetic metaorders. The enhanced dataset yields a smoother kernel that better captures the decay of impact, as shown in Figure~\ref{fig:cfm_enhanced_single}. In particular, we find that self-impact decays across multiple timescales without assuming any specific kernel shape. Additionally, self-impact estimations on the original CFM dataset show that concavity is a crucial driver of predictive power. Models with square-root impact systematically outperform their linear counterparts, in line with prior evidence \citep{hey2023cost}. Moreover, incorporating cross-impact across contracts with different expiries provides an additional and statistically significant gain in explanatory power. In particular, predictive accuracy improves when cross-impact is modeled as concave rather than linear, suggesting that the square-root law extends to cross-impact. Finally, we find that multi-asset estimations reveal asymmetric cross-impact patterns caused by relative liquidity differences, with more liquid assets exerting stronger and more persistent impacts on their less liquid counterparts. See Section \ref{sec:results} for details.

\begin{figure}[ht]
    \centering
    \includegraphics[width=0.6\linewidth]{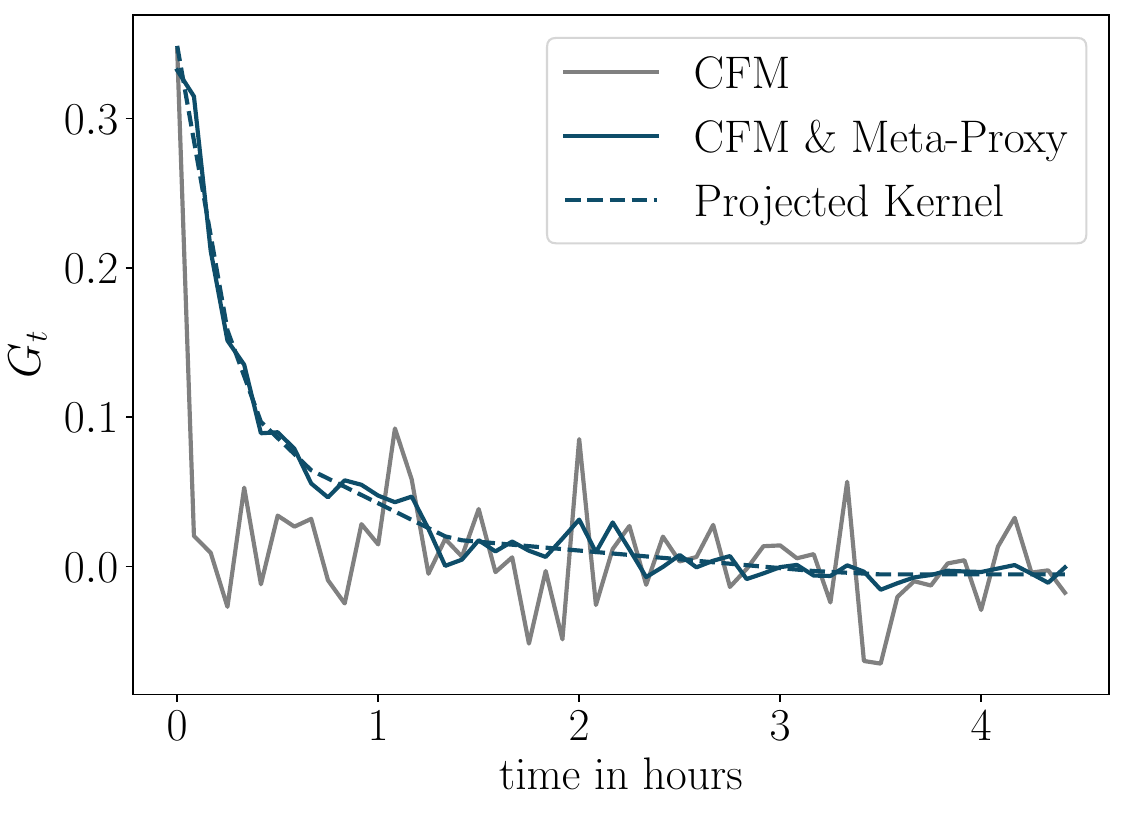}
    \caption{Estimated self-impact kernel for the square-root model ($c=0.5$) using CFM's proprietary data (gray) and the enhanced dataset with synthetic metaorders (blue) together with the projection of the kernel (dahed blue). The calibration is performed on metaorders for three corn futures with staggered maturities traded on the Chicago Mercantile Exchange (CME) between 2012 and 2022.}
    \label{fig:cfm_enhanced_single}
\end{figure}

As a complementary analysis, we apply the nonparametric estimator to a public order flow dataset for US equities, shifting the focus from the impact of a single agent to the market's price response to aggregate order flow imbalance. The nonparametric estimator confirms a shifted power-law decay of self-impact without imposing any parametric form, as illustrated in Figure~\ref{fig:of_imbalance_single}. Moreover, the projection onto a set of kernels with shape constraints markedly improves out-of-sample stability relative to the raw kernel. Predictive performance is driven primarily by concavity: square-root–type specifications outperform linear ones, with the optimal concavity for binned order flow being lower than that for metaorders, reflecting larger effective sizes in aggregated bins. Incorporating cross-impact further increases explanatory power, with concave cross-impact outperforming linear cross-impact, and we observe asymmetric cross-impact effects due to relative liquidity differences. Altogether, our results show that the setting based on order flow exhibits the same qualitative structure as the metaorder setting. See Section~\ref{sec:empirics-agg} for details.

\begin{figure}[ht]
    \centering
    \includegraphics[width=0.6\linewidth]{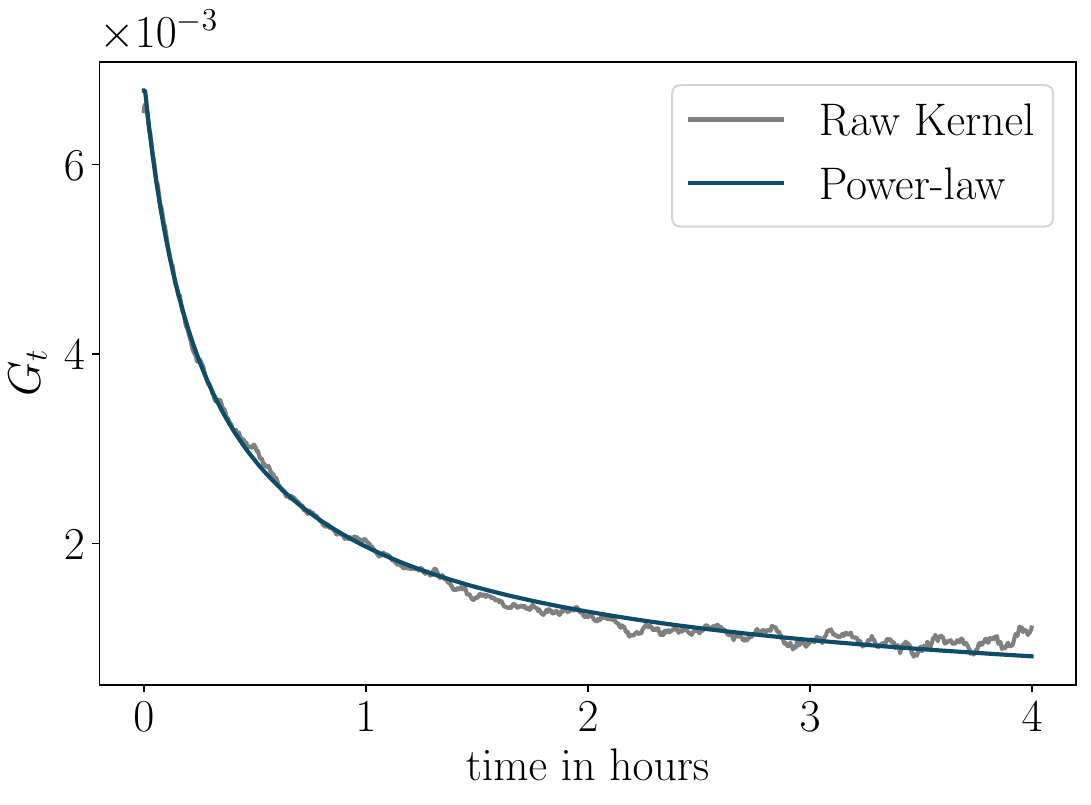}
    \caption{Estimated self-impact kernel for the square-root model ($c=0.5$) using aggregate order flow imbalance averaged over 197 stocks of the S\&P 500 in 2024 (gray) and the corresponding power-law fit (blue).}      \label{fig:of_imbalance_single}
\end{figure}

\paragraph {Our main contributions.} Below we summarize our main contributions in detail:
\begin{enumerate}
\item We extend the offline nonparametric estimator of \cite{neuman2023offline} from the linear single-asset setting to the concave multi-asset framework from \eqref{eq:propND}. Our extension allows distinct self-impact and cross-impact concavities, and yields confidence bounds for the estimator (see Theorem~\ref{thm:main}). 
Importantly, the nonparametric estimator avoids a parameter explosion in the multi-asset setting. While parametric propagators with single-exponential (1-EXP), double-exponential (2-EXP), or power-law (POWER) decay require up to $\mathcal{O}(d^2)$ tuned parameters and thus a grid search that scales quadratically in the number of assets $d$, the nonparametric propagator (RAW) and its projection enforcing shape constraints (PROJ) require none (see Tables~\ref{tab:model-parameter-comparison-d-assets} and \ref{tab:model-parameter-comparison}).

\item We prove that the discrete-time impact model in \eqref{eq:prop1D} admits price manipulation whenever $G$ is a convolution kernel that is continuous in $0$ with $G(0)>0$ and $h:\R\to\R$ is continuous, odd, and not linear (see Theorem \ref{thm:price-manipulation}). This result extends Proposition~1 of \cite{gatheral2011exponential} to the discrete-time setting.

\item We extend the metaorder proxy method from \cite{maitrier2025}, originally employed to augment sparse proprietary datasets, to the multi-asset setting. The proxy preserves key microstructure features, such as sign persistence and realistic size distribution, and materially stabilizes kernel estimates prior to projection (see Figure~\ref{fig:cfm_enhanced_single} and Section~\ref{sec:meta-proxy}).  As a result, we show on CFM's proprietary metaorder data that the proxy improves the predictive power of single- and multi-asset impact models (see Table~\ref{tab:R2}).

\item On CFM's proprietary metaorder dataset, we confirm that self-impact is concave and decays across multiple timescales, with square-root impact doubling the prediction performance in terms of $R^2$ compared to the linear case (see Table~\ref{tab:metaorder-model-comparison} and Figure~\ref{fig:R2_concave}). Moreover, we find that the proxy enhancement smoothens the decay and improves the $R^2$ from 4.6\% to 4.8\%, and that adding additional assets raises it further to $6.3-6.5\%$ (see Figures~\ref{fig:cfm_enhanced_single},~\ref{fig:Kernel1dloglog} and Table~\ref{tab:R2}), suggesting that the square-root law holds for cross-impact as well (see Table~\ref{tab:metaorder-crossimpact-comparison} and Figure~\ref{fig:R2_concave}). We also measure asymmetric cross-impact dynamics caused by relative liquidity differences (see Figure \ref{fig:MultiAsset}).

\item On the public order flow dataset, we confirm that self-impact is concave and that the estimated decay kernel is best fitted by a shifted power-law (see Figures~\ref{fig:of_imbalance_single}, \ref{fig:kernels} and Table~\ref{tab:concavity-comparison}), with the projected kernel significantly outperforming parametric models (see Table~\ref{tab:model-comparison}). The introduction of cross-impact further improves prediction performance, with concave specifications outperforming linear ones as in the metaorder case (see Table~\ref{tab:model-comparison-cross}). Finally, asymmetric cross-impact can be observed for highly correlated stocks as well (see Figure~\ref{fig:MultiAsset-stocks}).
\end{enumerate}

\begin{table}[htb]
  \centering
  \begin{tabular}{||c|c|c|c||}
    \hline
     Model      & Propagator              & Tuned parameters  & Search dimension \\ 
    \hline\hline
    1-EXP       & $e^{-\rho t}$                & $\rho$                   &  $d^2$            \\
    2-EXP       & $w_1e^{-\rho_1t} +(1-w_1)e^{-\rho_2t}$ & $\rho_1,\rho_2$    &  $2d^2$           \\
    POWER       & $(t+\tau)^{-\beta}$          & $\beta,\tau$             & $2d^2$             \\
    RAW    & $\wt{G}(t)$                 & None                       & 0                \\
    PROJ        & $G(t)$   & None                        & 0                \\
    \hline
  \end{tabular}
  \caption{Propagators, tuned parameters per propagator, and number of tuned parameters needed for $d$ assets in the parametric models 1-EXP, 2-EXP, POWER and the nonparametric models RAW, PROJ described above. The number of concavity parameters is the same for all propagators, and therefore omitted for the model comparison.}
  \label{tab:model-parameter-comparison-d-assets}
\end{table}

\paragraph{Structure of the paper.} The remainder of this paper is structured as follows. Section~\ref{sec:estimation} presents the theoretical foundations, that is, the definition of the concave multivariate propagator model and the offline estimator used for kernel calibration in Section~\ref{subsec:dataset}, a price manipulation result for the model in Section~\ref{subsec:manipulation}, and a confidence interval for the estimator in Section \ref{subsec:generalestimation}. Section~\ref{sec:empirics} contains the empirical analysis, including the construction of synthetic metaorders in Section~\ref{sec:meta-proxy}, the methodology needed for reliable kernel estimation in Section~\ref{sec:methods}, the empirical findings for proprietary metaorder data in Section~\ref{sec:results} and the empirical findings for public order flow data in Section~\ref{sec:empirics-agg}.

\section{Nonparametric Impact Estimation from Offline Data}\label{sec:estimation}
This section introduces the offline dataset and methodology used to estimate the true impact kernel denoted by $ \boldsymbol{G}^* =\big((G^*)^{(k,\ell)}\big)$, that was introduced in \eqref{eq:propND}. The static dataset consists of either metaorders or order flow imbalances and their corresponding price trajectories observed across multiple time periods and assets. 
A least-squares framework is used to estimate $\boldsymbol{G}^*$, in both an unconstrained and a constrained setting, where the constrained estimator enforces admissibility conditions derived from no-arbitrage arguments in the linear propagator model.

\subsection{Offline Dataset}\label{subsec:dataset}
Throughout, we fix the number of assets $d\in\N$, a time horizon $T>0$, and an equidistant partition $\mathbb{T} := \{0 = t_0,t_1,\ldots , t_M = T\}$ of the interval $[0,T]$ consisting of $M\in\N$ subintervals. The offline dataset is defined as follows.
\begin{definition} \label{def:dataset}
Let $N \in \mathbb{N}$ be the number of episodes in the dataset and consider
\begin{equation}
    \mathcal{D}  = \left\{ (\boldsymbol{P}_{t_i}^{(n)})^{M}_{i=0},(\boldsymbol{Q}_{t_i}^{(n)})^{M-1}_{i=0}\,\Big|\, n = 1,\dots,N\right\},
\end{equation}
where $\boldsymbol{P}^{(n)}_{t_i} = \left( (P_{t_i}^1)^{(n)}, \ldots, (P_{t_i}^d)^{(n)} \right)^\top\in\R^d$ and $\boldsymbol{Q}^{(n)}_{t_i} = \left( (Q_{t_i}^1)^{(n)}, \dots, (Q_{t_i}^d)^{(n)}\right)^\top\in\R^d$ respectively capture the observed price and traded volume for all $d$ assets at time $t_i$ in the $n$-th episode. For $n=1,\ldots, N$, 
define $\smash{\boldsymbol{P}^{(n)}:= (\boldsymbol{P}^{(n)}_{t_i})_{i=0}^{M}}$ and $\smash{\boldsymbol{Q}^{(n)}:= (\boldsymbol{Q}^{(n)}_{t_i})_{i=0}^{M-1}}$.

We call $\mathcal{D}$ an offline dataset if $(\boldsymbol{P}^{(n)},\boldsymbol{Q}^{(n)})_{n=1}^N$ 
are realizations of random variables defined on a probability space $(\Omega,\mathcal{F},\mathbb{P})$
satisfying the following properties: for each episode $n\in\{1,\ldots, N\}$, $\boldsymbol{Q}^{(n)}$ is measurable with respect to the $\sigma$-algebra $\mathcal{F}_{n-1}$, where 
\begin{equation}
    \mathcal{F}_{n-1} = \sigma\left\{\big(\boldsymbol{P}^{(r)}\big)_{r=1}^{n-1},\big(\boldsymbol{Q}^{(r)}\big)_{r=1}^{n-1} \right\},
\end{equation}
and there exist $\mathcal{F}_n$-measurable random variables $\boldsymbol{\epsilon}^{(n)}_{t_i} = \big((\epsilon^1_{t_i})^{(n)},\ldots,(\epsilon^d_{t_i})^{(n)}\big)^\top$ 
such that the price evolution follows a concave multi-asset propagator model with additive noise,

\begin{equation}\label{eq:returns}
    \boldsymbol{P}^{(n)}_{t_{i+1}} -  \boldsymbol{P}^{(n)}_{t_0} = \sum_{j=0}^{i} \boldsymbol{G}^*_{i-j}h(\boldsymbol{Q}^{(n)}_{t_j}) + \boldsymbol{\epsilon}^{(n)}_{t_i},\quad i = 0,\ldots,M-1,
\end{equation}
where $ \boldsymbol{G}^* \in \mathbb{R}^{M\times  d\times d } $ is the true (unknown) convolution kernel and $h:\R\to\R$ is a continuous, increasing function that is concave on $[0,\infty)$ and is applied elementwise. The random variables satisfy $\smash{\mathbb{E}[\boldsymbol{\epsilon}_{t_i}^{(n)}| \mathcal{F}_{n-1}] = 0}$ and $\smash{\mathbb{E}[(\boldsymbol{\epsilon}^{(n)}_{t_i})^\top\boldsymbol{\epsilon}^{(n)}_{t_i}] < \infty}$ for all $i = 0,\dots,M-1$.
\end{definition}
\begin{remark}\label{rem:h}
A common specification of the concave impact function $h$ in Definition~\ref{def:dataset} is
\be\label{eq:hc}
h_c(x)=\operatorname{sgn}(x)|x|^c,\quad c\in(0,1],
\ee
which we also use in the empirical analysis in Section~\ref{sec:empirics}. In practice, \eqref{eq:returns} can be generalized by letting the concavity depend on the asset pair. A case of particular interest is a two-parameter version in which self- and cross-impact are governed by $h_{c_S}$ and $h_{c_X}$ for constants $c_S,c_X\in(0,1]$, respectively. While Definition~\ref{def:dataset} does not cover this variant, it can be incorporated with minor notational changes, and all subsequent arguments carry over unchanged.
\end{remark}
Define $\smash{\boldsymbol{\epsilon}^{(n)}:=(\boldsymbol{\epsilon}^{(n)}_{t_i})_{i=0}^{M-1}}$. For any $k\in\N$ and $u,v\in\R^{k}$, we denote by $\langle u,v\rangle_{\R^k}$ the inner product in $\R^k$ and by $\|v\|_{\R^k}$ the Euclidean norm. In line with \cite{neuman2023offline}, the noise is assumed to be conditionally sub-Gaussian. This assumption enables the application of high-probability bounds to the estimated propagator matrices. Moreover, this setting is versatile enough to allow for dependence between the price trajectories in consecutive trading periods.
\begin{assumption}\label{assumption}
Given an offline dataset $\mathcal{D}$ of size $N\in \mathbb{N}$, there exists a known constant $R>0$ such that for all $n = 1,\ldots,N$,
\begin{equation}
    \mathbb{E}\Big[\exp\big(\langle v,\boldsymbol{\epsilon}^{(n)}\rangle_{\R^{M d}}\big) | \mathcal{F}_{n-1}\Big] \leq \exp\left(\frac{R^2\|v\|_{\R^{M d}}^2}{2}\right),\quad  \text{for all } v \in \mathbb{R}^{M d},
\end{equation}
that is, each $\boldsymbol{\epsilon}^{(n)}$ is $R$-conditionally sub-Gaussian with respect to $\mathcal{F}_{n-1}$.
\end{assumption}
\subsection{Price Manipulation and Admissible Kernels}\label{subsec:manipulation}
Let $\boldsymbol{G} \in \mathbb{R}^{M\times  d\times d }$ be a propagator kernel and $\boldsymbol{Q}:=(\boldsymbol{Q}_{t_i})_{i=0}^{M-1}$ with 
$$\boldsymbol{Q}_{t_i} = \left( (Q_{t_i}^1), \ldots, (Q_{t_i}^d) \right)^\top\in\R^d$$
be a sequence of trades in the $d$ assets. In line with Lemma~1 of \cite{alfonsi2016multivariate}, it follows from \eqref{eq:returns} that the total execution costs corresponding to $\boldsymbol{Q}$ caused by the self- and cross-impact associated with $\boldsymbol{G}$ are given by
\be\label{eq:costs}
\mathcal{C}_{\mathbb{T}}(\boldsymbol{Q}):=\frac{1}{2}\sum_{i,j=0}^{M-1}\boldsymbol{Q}_{t_i}^\top \bar{\boldsymbol{G}}_{i-j}h(\boldsymbol{Q}_{t_j}),
\ee
where $\bar{\boldsymbol{G}} \in \mathbb{R}^{(2M-1)\times  d\times d }$ is given by
$$
\bar{\boldsymbol{G}}_i=
\begin{cases}
    \boldsymbol{G}_{i},\phantom{\frac{1}{2}(\boldsymbol{G}_0+\boldsymbol{G}_0^\top)\boldsymbol{G}_{-i}^\top} 1\leq i\leq M-1,\\
    \frac{1}{2}(\boldsymbol{G}_0+\boldsymbol{G}_0^\top),\phantom{\boldsymbol{G}_{i}\boldsymbol{G}_{-i}^\top} i=0,\\
    \boldsymbol{G}_{-i}^\top,\phantom{\boldsymbol{G}_{i}\frac{1}{2}(\boldsymbol{G}_0+\boldsymbol{G}_0^\top)} 1\leq -i\leq M-1.
\end{cases}
$$
The condition 
\be\label{eq:absence}
\mathcal{C}_{\mathbb{T}}(\boldsymbol{Q})\geq 0,\quad\text{for all }\boldsymbol{Q}=(\boldsymbol{Q}_{t_i})_{i=0}^{M-1}\text{ and all }\mathbb{T},
\ee
is a regularity condition for the underlying impact model that rules out the possibility of profiting from exploiting one's own market impact (see, e.g., \cite{jaber2024optimalportfoliochoicecrossimpact,alfonsi2016multivariate,Gatheral2010,gatheral2011exponential,gatheral.al.12,hey2024cross}). For linear cross-impact ($h(x)=x$), conditions that prevent price manipulation have been derived in Theorem~2 of \cite{alfonsi2016multivariate} for the discrete-time case and in Theorem~2.14 of \cite{jaber2024optimalportfoliochoicecrossimpact} for the continuous-time case (see also Theorem~1.2 of \cite{neuman2025mercer} for their equivalence). To preclude price manipulation, kernels must be nonnegative, convex, symmetric, and nonincreasing. However, the present work incorporates concave impact functions $h$, for which the theoretical understanding of admissibility is more involved. For the continuous-time case, it is known that already for a single asset any model with a nonlinear impact function $h$ and a kernel $\boldsymbol{G}$ that is nonsingular at time zero admits price manipulation (see Proposition~1 of \cite{gatheral2011exponential}). Moreover, even for singular kernels such as the power-law propagator, examples for price manipulation can be constructed (see \cite{curato2017optimal,Gatheral2010}). The following theorem shows that there is not much hope to rule out price manipulation for the discrete-time case either. 
\begin{theorem}\label{thm:price-manipulation}
Let $H:[0,T]^2\to\R$ and suppose that there exists $t^*\in[0,T)$ such that $H$ is continuous in $(t^*,t^*)$ with $H(t^*,t^*)>0$. Let $h:\mathbb{R}\to\mathbb{R}$ be continuous, odd function which is not linear, that is, not of the form $h(x)=qx$ for some $q\in\R$.
Then for every $M\ge3$ there exist $t_0<\ldots<t_{M-1}$ and $x_0,\dots,x_{M-1}\in\mathbb{R}$ such that
\[
\sum_{i,j=0}^{M-1} x_i\,H(t_i,t_j)\,h(x_j)\;<\;0.
\]
In particular, given a kernel $G:[0,T]\to\R$ that is continuous in $0$ with $G(0)>0$, it holds that
\be
\sum_{i,j=0}^{M-1} x_i\,G(|t_i-t_j|)\,h(x_j)\;<\;0.
\ee
\end{theorem}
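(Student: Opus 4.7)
The plan is to cluster all times $t_0<\cdots<t_{M-1}$ close to $t^*$ and exploit the continuity of $H$ at $(t^*,t^*)$ to reduce the quadratic form to its constant-kernel surrogate $a\bigl(\sum_i x_i\bigr)\bigl(\sum_j h(x_j)\bigr)$ with $a:=H(t^*,t^*)>0$, plus a controllable remainder. Precisely, since $t^*\in[0,T)$, I can choose $\rho>0$ with $[t^*,t^*+\rho]\subseteq[0,T]$, and by continuity at $(t^*,t^*)$ shrink $\rho$ so that $|H(u,v)-a|<\varepsilon$ for all $u,v\in[t^*,t^*+\rho]$. Placing the (distinct, increasing) $t_i$ inside that interval then yields
\[
\left|\sum_{i,j=0}^{M-1} x_i\,H(t_i,t_j)\,h(x_j)\;-\;a\!\left(\sum_i x_i\right)\!\left(\sum_j h(x_j)\right)\right|\le\varepsilon\!\left(\sum_i|x_i|\right)\!\left(\sum_j|h(x_j)|\right).
\]

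First I would use nonlinearity of $h$ to produce a genuine nonadditivity. If $h(u)+h(v)=h(u+v)$ held for all $u,v>0$, then continuity would force $h(x)=qx$ on $(0,\infty)$, and oddness would extend this to $\mathbb{R}$, contradicting the hypothesis. Hence there exist $0<s<t$ with $H_0:=h(s)+h(t-s)-h(t)\neq 0$. I then set $x_0=s$, $x_1=t-s$, $x_2=-(t-\delta_0)$, and $x_j=0$ for $3\le j\le M-1$, where $\delta_0\neq 0$ is small and of sign opposite to $H_0$. Using $h(0)=0$ and oddness, $\sum_i x_i=\delta_0$ and $\sum_j h(x_j)=h(s)+h(t-s)-h(t-\delta_0)=:H(\delta_0)$, with $H(\delta_0)\to H_0$ as $\delta_0\to 0$ by continuity of $h$. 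Fixing $\delta_0$ small enough that $H(\delta_0)$ retains the sign of $H_0$ ensures $a\,\delta_0\,H(\delta_0)<0$, bounded away from zero.

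With $\delta_0$ thus fixed, the sums $\sum_i|x_i|$ and $\sum_j|h(x_j)|$ are bounded by constants independent of $\varepsilon$, so choosing $\varepsilon$ small enough (equivalently, the clustering tight enough) makes the remainder above strictly smaller than $|a\,\delta_0\,H(\delta_0)|$, and hence $\sum_{i,j=0}^{M-1} x_i\,H(t_i,t_j)\,h(x_j)<0$. The second displayed inequality of the theorem then follows by applying the first part to $H(u,v):=G(|u-v|)$, which is continuous at $(t^*,t^*)$ with value $G(0)>0$ for every $t^*\in[0,T)$. The main obstacle is the opening step: converting the qualitative hypothesis ``$h$ is not linear'' into the quantitative statement $H_0\neq 0$; the Cauchy-equation argument delivers this cleanly, after which the clustering/perturbation part is routine.
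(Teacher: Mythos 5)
Your proof is correct and follows essentially the same route as the paper's: the same Cauchy-functional-equation argument turns nonlinearity of $h$ into a nonzero defect $h(s)+h(t-s)-h(t)\neq 0$, the same three-point configuration of $x_i$'s with a small perturbation makes $\bigl(\sum_i x_i\bigr)\bigl(\sum_j h(x_j)\bigr)$ strictly negative, and continuity of $H$ at $(t^*,t^*)$ handles the passage to distinct times. Your version is slightly more quantitative in that last step (an explicit $\varepsilon$-remainder bound where the paper simply invokes continuity), but the underlying argument is identical.
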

The proof of Theorem~\ref{thm:price-manipulation} is given in the Appendix~\ref{appendix:proof}.
\begin{remark}
By \eqref{eq:costs} and \eqref{eq:absence}, for any $G$ and $h$ as in Theorem~\ref{thm:price-manipulation}, the corresponding single-asset impact model admits price manipulation. In particular, Theorem~\ref{thm:price-manipulation} extends Proposition~1 of \cite{gatheral2011exponential} to the discrete-time setting.
\end{remark}

\begin{remark}
A concave impact model in which the absence of price manipulation can be ensured is the Alfonsi-Fruth-Schied model (see \cite{alfonsi2014optimal,alfonsi2010optimal2,alfonsi2010optimal,gatheral2011exponential,predoiu2011optimal}). For a related multi-asset framework, necessary conditions for this absence in terms of kernel matrix properties were recently established in \cite{hey2024cross} (see Lemma 5.1 therein). Notably, when the concavity parameter $c$ of the impact function $h_c$ from \eqref{eq:hc} is contained in $[0.5,1]$, the conditions closely resemble those in the linear setting. While the Alfonsi-Fruth-Schied model provides a useful framework to study concave impact, it is not suitable for our purposes, since it relies on the assumption that the impact decays exponentially. Moreover, its calibration requires additional data compared to the model in \eqref{eq:returns}, which is often not available in proprietary metaorder datasets. 
\end{remark}

In this work, given that Theorem~\ref{thm:price-manipulation} only holds for nonsingular kernels and that the introduction of shape constraints in \cite{neuman2023offline} improves the kernel estimation, we adopt and slightly relax the admissibility conditions from the linear case (see Theorem~2.14 in \cite{jaber2024optimalportfoliochoicecrossimpact} and Theorem~2 in \cite{alfonsi2016multivariate}). As we will see in our empirical analysis, this will significantly improve the model's out-of-sample performance (see Section \ref{sec:empirics}). To wit, we no longer enforce symmetry of the propagator matrix to allow for asymmetric cross-impact across assets. The resulting class of admissible kernels includes all $ \boldsymbol{G} \in \mathbb{R}^{M\times  d\times d } $ such that the induced matrix-valued map $\{0,\ldots, M-1\} \rightarrow \mathbb{R}^{d \times d},\ i\mapsto \boldsymbol{G}_i$ is nonnegative, nonincreasing, and convex. Formally, the set of admissible kernels is defined as follows:
\begin{equation}\label{eq:admissible}
\mathcal{G}_{\mathrm{ad}} := \left\{ 
   \boldsymbol{G} = \left(\boldsymbol{G}_i\right)^{M-1}_{i=0} \ \left|\
   \begin{aligned} 
   & v^\top \boldsymbol{G}_{i} v\geq 0,&& 0\leq i\leq M-1,\\
   & v^\top \boldsymbol{G}_{i+1} v \leq v^\top \boldsymbol{G}_i v,&& 0\leq i\leq M-2, \\
   & v^\top (\boldsymbol{G}_{i+2}-\boldsymbol{G}_{i+1}) v\leq v^\top (\boldsymbol{G}_{i+1}-\boldsymbol{G}_i) v,\ && 0\leq i\leq M-3,\\
   &\text{for all }  v \in \mathbb{R}^d.
   \end{aligned} \right.
\right\}.
\end{equation}
In \eqref{eq:admissible}, the first, second, and third conditions respectively ensure nonnegativity, monotonicity, and convexity of $\boldsymbol{G}$, respectively. This translates into requiring that the self-impact terms are nonnegative and dominate the cross-impact terms, and that all impact terms decay in a convex manner.
Typical examples of cross-impact propagator matrices in $\mathcal{G}_{\mathrm{ad}}$ include kernels with exponential or power-law decay; see Example 2.2 in \cite{jaber2024optimalportfoliochoicecrossimpact} for a comprehensive overview.
\begin{remark}\label{rem:Gad}
In \eqref{eq:admissible}, we may additionally require the following symmetry condition,
$$
\boldsymbol{G}_{i}=\boldsymbol{G}_{i}^\top,\quad 0\leq i\leq M-1,
$$
without affecting any of the results below. As expounded above, this ensures the absence of price manipulation in the case of linear impact (see Theorem~2 in \cite{alfonsi2016multivariate}), but it restricts attention to symmetric cross-impact. However, because our focus is on concave impact, where such no-manipulation guarantees generally fail, we retain the baseline definition of $\mathcal{G}_{\mathrm{ad}}$ in \eqref{eq:admissible}.
\end{remark}


\subsection{Nonparametric Estimation}\label{subsec:generalestimation}
The true unknown propagator matrix $\boldsymbol{G}^*\in\R^{M\times d\times d}$ is estimated via a least-squares method with constraints. For this, we introduce the following convention. 
\begin{convention}
Given $\boldsymbol{G}\in\R^{M\times d\times d}$, we identify $\R^{M\times d\times d}$ with $\R^{Md^2}$ via the vectorization that stacks entries so that $\ell\in\{1,\ldots,d\}$ is the slowest-varying index, then $i\in\{0,\ldots,M-1\}$, and $k\in\{1,\ldots,d\}$ is the fastest:
\be\label{eq:Gv}
\boldsymbol G=
\left( 
G^{(1,1)}_{0}, \ldots ,G^{(1,d)}_{0} ,\ldots ,G^{(1,1)}_{M-1} , \ldots ,G^{(1,d)}_{M-1},\ldots,
G^{(d,1)}_{0}, \ldots ,G^{(d,d)}_{0} ,\ldots ,G^{(d,1)}_{M-1} , \ldots ,G^{(d,d)}_{M-1}\right)^\top. 
\ee
By this identification, we will always use the same notation for the tensor and its vectorization, writing $\boldsymbol G$ for either, with the intended meaning clear from context.
\end{convention}
For each episode $n\in\{1,\ldots, N\}$, define the vector of observed returns $\boldsymbol{y}^{(n)} \in \mathbb{R}^{M d}$ as 
\begin{equation}\label{eq:yn}
\boldsymbol{y}^{(n)} = \begin{pmatrix}
    \big((P_{t_{i+1}}^{1})^{(n)}- (P_{t_{0}}^{1})^{(n)}\big)_{i=0}^{M-1}\\
    \vdots\\
    \big((P_{t_{i+1}}^{d})^{(n)}- (P_{t_{0}}^d)^{(n)}\big)_{i=0}^{M-1} \end{pmatrix}.
\end{equation}
Moreover, define the auxiliary matrices $\boldsymbol{D}^{(n)} \in \mathbb{R}^{M \times (M  d)}$ and $\boldsymbol{U}^{(n)} \in \mathbb{R}^{(M d) \times (M d^2)}$ capturing the transformed traded volume as
$$
\boldsymbol{D}^{(n)} = 
\begin{pmatrix}
h\big((Q^1_{t_0})^{(n)}\big) &\dots & h\big((Q^d_{t_0})^{(n)}\big) & 0 & \dots&0 \\
h\big((Q^1_{t_1})^{(n)}\big) & \dots & h\big((Q^d_{t_1})^{(n)}\big) & h\big((Q^1_{t_0})^{(n)}\big) &\dots & 0 \\
\vdots & \vdots & \vdots & \vdots & \ddots & \vdots \\
h\big((Q^1_{t_{M-1}})^{(n)}\big) &\dots & h\big((Q^d_{t_{M-1}})^{(n)}\big)  &h\big((Q^1_{t_{M-2}})^{(n)}\big)& \dots & h\big((Q^d_{t_0})^{(n)}\big) 
\end{pmatrix}, 
$$
and 
\be\label{eq:Un}
\boldsymbol{U}^{(n)} = 
\begin{pmatrix}
\boldsymbol{D}^{(n)} & 0 & \dots &0 \\
0 & \boldsymbol{D}^{(n)} & \ddots & \vdots \\
\vdots & \ddots &\ddots & 0 \\
0 & \dots & 0 & \boldsymbol{D}^{(n)}
\end{pmatrix}.
\ee
Then, by \eqref{eq:returns}, \eqref{eq:Gv}, and \eqref{eq:Un}, the observed returns in \eqref{eq:yn} satisfy the equation
\begin{equation}\label{eq:returns-equation}
    \boldsymbol{y}^{(n)} = \boldsymbol{U}^{(n)} \boldsymbol{G}^* + \boldsymbol{\epsilon}^{(n)}.
\end{equation}
In line with \cite{neuman2023offline}, this motivates us to consider the following constrained least-squares estimator,
\begin{equation}\label{eq:constrained}
    \boldsymbol{G}_{N,\lambda} := \argmin_{\boldsymbol{G} \in \mathcal{G}_{\mathrm{ad}}} \left( \sum_{n=1}^N \left\| \boldsymbol{y}^{(n)} - \boldsymbol{U}^{(n)} \boldsymbol{G} \right\|_{\R^{M d}}^2 + \lambda \left\| \boldsymbol{G} \right\|_{\R^{M d^2}}^2 \right),
\end{equation}
where $\lambda >0$ is a regularization parameter, and $\mathcal{G}_{\mathrm{ad}}$ is the set of admissible kernels from \eqref{eq:admissible}.
\begin{remark}
As noted in \cite{neuman2023offline}, the regularization parameter $\lambda >0$ ensures the strong convexity of \eqref{eq:constrained} in $\boldsymbol{G}$. Together with the fact that $\mathcal{G}_{\mathrm{ad}}$ in \eqref{eq:admissible} is nonempty, closed, and convex, we obtain that $\boldsymbol{G}_{N,\lambda}$ is uniquely defined. This analogously holds when $\mathcal{G}_{\mathrm{ad}}$ is modified as in Remark~\ref{rem:Gad}. This regularization parameter is necessary in general, as it ensures the invertibility of the matrix on the left-hand side of \eqref{eq:unconstrained-solution}. In the multivariate setting, asset-specific regularization parameters $\Lambda=(\lambda^{(\ell,k)})_{\ell,k=1}^d$ can be introduced in practice. The optimization problem \eqref{eq:constrained} then becomes 
\begin{equation}\boldsymbol{G}_{N,\Lambda} := \argmin_{\boldsymbol{G} \in \mathcal{G}_{\mathrm{ad}}}  \left( \sum_{n=1}^N \left\| \boldsymbol{y}^{(n)} - \boldsymbol{U}^{(n)} \boldsymbol{G} \right\|_{\R^{M d}}^2  \right.\left. + \sum_{\ell,k=1}^d \lambda^{(\ell,k)} \left\| (G_{i}^{(\ell,k)})_{i=0}^{M-1} \right\|_{\R^{M}}^2 \right).
\end{equation} 
\end{remark}
The constrained estimator $\boldsymbol{G}_{N,\lambda}$ in \eqref{eq:constrained} can be computed by projecting the corresponding unconstrained least-squares estimator onto the set $\mathcal{G}_{\mathrm{ad}}$. Indeed, letting, \begin{equation}\label{eq:unconstrained}
\tilde{\boldsymbol{G}}_{N,\lambda}:=\argmin_{\boldsymbol{G} \in \R^{Md^2}} \left( \sum_{n=1}^N \left\| \boldsymbol{y}^{(n)} - \boldsymbol{U}^{(n)} \boldsymbol{G} \right\|_{\R^{M d}}^2 + \lambda \left\| \boldsymbol{G} \right\|_{\R^{M d^2}}^2 \right),
\end{equation}
we have that
\begin{equation}
\begin{aligned}\label{eq:constrained-rewritten}
    \boldsymbol{G}_{N,\lambda} 
    &= \argmin_{\boldsymbol{G} \in \mathcal{G}_{\mathrm{ad}}} \left\| W^{1/2}_{N,\lambda} \left( \boldsymbol{G} - \tilde{\boldsymbol{G}}_{N,\lambda} \right) \right\|_{\R^{M d^2}}^2,
\end{aligned}
\end{equation}
with
\begin{equation}\label{eq:unconstrained-solution}
    W_{N,\lambda} := \sum_{n=1}^N (\boldsymbol{U}^{(n)})^\top \boldsymbol{U}^{(n)} + \lambda \mathbb{I}_{M  d^2}, \quad
    \tilde{\boldsymbol{G}}_{N,\lambda} := W^{-1}_{N,\lambda} \sum_{n=1}^N (\boldsymbol{U}^{(n)})^\top \boldsymbol{y}^{(n)},
\end{equation}
where $\boldsymbol{y}^{(n)}$ and $\boldsymbol{U}^{(n)}$ are defined in \eqref{eq:yn} and \eqref{eq:Un}, respectively, and $\mathbb{I}_{M  d^2}$ denotes the identity matrix of size $(Md^2)\times (Md^2)$. That is, the unconstrained problem \eqref{eq:unconstrained} is solved explicitly by \eqref{eq:unconstrained-solution}, and the constrained problem \eqref{eq:constrained} can be solved efficiently by means of \eqref{eq:constrained-rewritten} and  standard convex optimization packages (see, e.g., \cite{diamond2016cvxpy}).

This estimation procedure extends the functional regression with structural constraints on the kernel shape from \cite{neuman2023offline} to the multivariate setting. In particular, the confidence region of the estimated impact coefficients $\boldsymbol{G}_{N,\lambda}$ in terms of the observed data can be derived analogously to Theorem 2.14 therein. 
\begin{theorem}\label{thm:main}
Suppose that the true propagator matrix $\boldsymbol{G}^*$ is contained in the set of admissible kernels $\mathcal{G}_{\mathrm{ad}}$ from \eqref{eq:admissible} and that Assumption \ref{assumption} holds. For any $\lambda > 0$, let the constrained least-squares estimator $\boldsymbol{G}_{N,\lambda}$ be defined as in \eqref{eq:constrained}. Then, for all $\lambda > 0$ and $\delta \in (0,1)$, with probability at least $1 - \delta$,
\begin{equation}
\left\|{W_{N,\lambda}^{1/2} }\big( \boldsymbol{G}_{N,\lambda} - \boldsymbol{G}^* \big) \right\|_{\R^{M d^2}}
\leq R \left( 2 \log \left( \frac{\det(W_{N,\lambda})}{\delta^2\lambda^{M d^2}}  \right) \right)^{1/2} 
+ \lambda \left\| {W^{-1/2}_{N,\lambda}} \boldsymbol{G}^* \right\|_{\R^{M d^2}},
\end{equation}
where $W_{N,\lambda} = \sum_{n=1}^N (\boldsymbol{U}^{(n)})^\top \boldsymbol{U}^{(n)} + \lambda \mathbb{I}_{M  d^2}$ and $R>0$ is the constant from Assumption~\ref{assumption}. 
\end{theorem}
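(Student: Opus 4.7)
The plan is to adapt the self-normalized martingale argument of Theorem~2.14 in \cite{neuman2023offline} (itself a variant of the Abbasi-Yadkori--P\'al--Szepesv\'ari concentration bound) to the multivariate, $h$-transformed setting, and then pass from the unconstrained ridge estimator to the constrained one via a Hilbertian projection argument. First, I would substitute $\boldsymbol{y}^{(n)} = \boldsymbol{U}^{(n)}\boldsymbol{G}^* + \boldsymbol{\epsilon}^{(n)}$ into the closed form \eqref{eq:unconstrained-solution} to obtain the decomposition
$$
\tilde{\boldsymbol{G}}_{N,\lambda} - \boldsymbol{G}^* \;=\; W_{N,\lambda}^{-1}\sum_{n=1}^N (\boldsymbol{U}^{(n)})^\top \boldsymbol{\epsilon}^{(n)} \;-\; \lambda\, W_{N,\lambda}^{-1} \boldsymbol{G}^*,
$$
multiply through by $W_{N,\lambda}^{1/2}$, and apply the triangle inequality to split the error into a stochastic noise term and a deterministic bias term of size $\lambda\|W_{N,\lambda}^{-1/2}\boldsymbol{G}^*\|_{\R^{Md^2}}$.

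Next, I would control the stochastic term by a self-normalized concentration inequality. Since $\boldsymbol{Q}^{(n)}$ is $\mathcal{F}_{n-1}$-measurable by Definition~\ref{def:dataset}, the design matrix $\boldsymbol{U}^{(n)}$ is $\mathcal{F}_{n-1}$-predictable; together with Assumption~\ref{assumption}, a direct calculation shows that for every $v\in\R^{Md^2}$ the process
$$
M_t(v) \;=\; \exp\!\Big(\big\langle v,\, {\textstyle\sum_{n\leq t}} (\boldsymbol{U}^{(n)})^\top \boldsymbol{\epsilon}^{(n)}\big\rangle_{\R^{Md^2}} \;-\; \tfrac{R^2}{2}\, v^\top\!\big({\textstyle\sum_{n\leq t}} (\boldsymbol{U}^{(n)})^\top \boldsymbol{U}^{(n)}\big) v\Big)
$$
is a nonnegative supermartingale with $\mathbb{E}[M_0(v)]=1$. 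Applying the method of mixtures with a Gaussian prior $\mathcal{N}(0,(R^2/\lambda)\,\mathbb{I}_{Md^2})$, followed by Markov's inequality, yields with probability at least $1-\delta$
$$
\Big\|W_{N,\lambda}^{-1/2}\sum_{n=1}^N (\boldsymbol{U}^{(n)})^\top \boldsymbol{\epsilon}^{(n)}\Big\|_{\R^{Md^2}} \;\leq\; R\sqrt{2\log\!\big(\det(W_{N,\lambda})/(\delta^2\lambda^{Md^2})\big)},
$$
which, combined with the deterministic bias bound from the first step, establishes the theorem for the unconstrained estimator $\tilde{\boldsymbol{G}}_{N,\lambda}$.

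Finally, I would transfer the bound to the constrained estimator. Since $W_{N,\lambda}\succ 0$, the map $\boldsymbol{G}\mapsto \|W_{N,\lambda}^{1/2}(\boldsymbol{G}-\tilde{\boldsymbol{G}}_{N,\lambda})\|_{\R^{Md^2}}$ defines a Hilbertian distance, and the rewritten form \eqref{eq:constrained-rewritten} identifies $\boldsymbol{G}_{N,\lambda}$ as the metric projection of $\tilde{\boldsymbol{G}}_{N,\lambda}$ onto the closed convex set $\mathcal{G}_{\mathrm{ad}}$. Since $\boldsymbol{G}^* \in \mathcal{G}_{\mathrm{ad}}$ by hypothesis, the classical nonexpansiveness of projections onto closed convex sets in Hilbert space gives
$$
\|W_{N,\lambda}^{1/2}(\boldsymbol{G}_{N,\lambda} - \boldsymbol{G}^*)\|_{\R^{Md^2}} \;\leq\; \|W_{N,\lambda}^{1/2}(\tilde{\boldsymbol{G}}_{N,\lambda} - \boldsymbol{G}^*)\|_{\R^{Md^2}},
$$
completing the proof. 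The main delicate point is the supermartingale construction in the second step: one must verify that the $h$-transformed traded volumes entering $\boldsymbol{U}^{(n)}$ do not degrade the conditional sub-Gaussian tail of $\boldsymbol{\epsilon}^{(n)}$. Because $\boldsymbol{U}^{(n)}$ is $\mathcal{F}_{n-1}$-measurable, the concavity of $h$ does not interact with the noise moment generating function and only enters through the predictable design, so the argument from the linear single-asset case of \cite{neuman2023offline} transfers essentially verbatim, with the effective parameter dimension inflated from $M$ to $Md^2$, which explains the exponent of $\lambda$ inside the logarithm.
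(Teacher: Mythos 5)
Your proposal is correct and follows essentially the same route as the paper: the paper's proof merely checks that $\boldsymbol{U}^{(n)}$ is $\mathcal{F}_{n-1}$-measurable and that the noise satisfies Assumption~\ref{assumption}, and then invokes the abstract Theorem~4.5 of \cite{neuman2023offline}, whose content is exactly the error decomposition, the self-normalized martingale bound via the method of mixtures, and the nonexpansiveness of the metric projection onto $\mathcal{G}_{\mathrm{ad}}$ in the $W_{N,\lambda}$-weighted norm that you reconstruct. The only difference is cosmetic: you re-derive the concentration result from first principles rather than citing it.
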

\begin{remark}
Theorem~\ref{thm:main} extends Theorem~2.14 from \cite{neuman2023offline} into two important directions. First, it allows the introduction of a concave impact function $h$ as in \eqref{eq:returns}. More importantly, it allows the analysis of a portfolio of $d$ assets with cross-impact across them as specified in \eqref{eq:returns}. Note that the convergence rate in Theorem~\ref{thm:main} remains similar to the one in Theorem 2.14 from \cite{neuman2023offline}, with the only difference being the dimensional scaling. Since the estimator now involves $d^2$ kernels over $M$ time steps, the determinant term contributes a factor of $\lambda^{-M d^2}$ instead of $\lambda^{-M}$.
\end{remark}
The proof of Theorem~\ref{thm:main} is deferred to the Appendix~\ref{appendix:proof}.

\section{Empirical Analysis}\label{sec:empirics}

This section presents the nonparametric estimation of self- and cross-impact propagators for metaorders on futures contracts and for aggregate order flow on S\&P 500 stocks. In both cases, self-impact models are calibrated and evaluated first, followed by cross-impact models. Moreover, their prediction performance is compared to established parametric models with single-exponential, double-exponential, and power-law decay. As the number of observed metaorders in a single contract is limited, a proxy mechanism is employed to synthetically augment the dataset. Using corn futures as an example, we show that this augmentation enables more robust estimation of the impact kernel. The results indicate that concave impact functions outperform their linear counterparts in the single-asset case. Accordingly, the identified concavity parameter is retained for subsequent model calibration within the multivariate framework involving contracts with different maturities.

\subsection{Datasets}\label{sec:dataset}
Our study draws on a combination of proprietary and public datasets including various asset classes. These datasets serve distinct but complementary research purposes: the first two focus on the estimation of price impact from metaorders in futures markets, while the third examines order flow imbalance in equity markets.

\subsubsection*{1. Metaorders on energy and agricultural futures}
We start by integrating proprietary execution data from Capital Fund Management (CFM), consisting of $12,000$ metaorders across energy and agricultural futures, including oil, gas, metals, corn, wheat, and soybeans traded on the Chicago Mercantile Exchange (CME) from 2012 to 2022. The inherent challenge stems from the disparate trading hours - energy futures exhibit near-continuous 23-hour trading sessions, whereas agricultural products have intraday breaks. To address this, we constrain our analysis to a synchronous trading window from 9:30 AM to 2:20 PM Eastern Time (ET), enabling a consistent interpretation of execution trajectories across the combined dataset.  

\subsubsection*{2. Synthetic metaorders on corn futures}
To delve deeper into the self- and cross-impact dynamics within the agricultural sector under conditions of sparse data, our analysis narrows to corn futures. We select three contracts with staggered maturities, all based on the same underlying and exhibiting return correlations above 90\%. Specifically, we use the front contract (Corn0) and the next two deferred maturities (Corn1 and Corn2), expiring in one, two, and three quarters, respectively. This dataset comprises approximately $\num{1500}$ metaorders. 

In recognizing the limited availability of proprietary metaorders, which restricts reliable impact estimation, we augment this dataset with a synthetic metaorder proxy specifically for corn futures. Inspired by empirical insights and methodologies highlighted in \cite{maitrier2025doublesquarerootlawevidence, sato2024}, this proxy is constructed through an algorithm that introduces trade-sign autocorrelation and randomized trader ID assignment. This synthetic enrichment ensures consistency with observed market microstructure features and enhances the robustness of price impact estimation at the single-asset and portfolio level.  

\subsubsection*{3. Order flow imbalance in equities}
Finally, we use high-frequency tick data from the LOBSTER database to analyze price dynamics driven by aggregate order flow imbalance. The dataset includes detailed order book events for all NASDAQ-traded stocks, in particular, all executed orders. Our study focuses on 197 constituents of the S\&P 500 with complete data across all 237 regular trading days in 2024, where trading hours run from 9:30 AM to 4:00 PM ET.

Unlike the proprietary metaorder data, which enables the study of the price impact of an individual agent’s actions, this dataset allows us to model how prices respond to the total observed order flow in the market. It thus addresses a distinct but complementary question: not how one's own trading moves prices, but how prices change as a result of aggregated traded volume. Understanding and quantifying this mechanism is crucial for the optimal execution of one's own trades as well.

\subsection{Metaorder Proxy}\label{sec:meta-proxy}

The number of available metaorders in the proprietary dataset is insufficient to support a reliable estimation of the impact kernel on a per-product basis. As shown in \cite{neuman2023offline}, the convergence rate of the estimator depends on both the number of samples and the number of trading periods $M$. While the dataset does not permit an increase in $M$, the number of effective samples can be enhanced by introducing a synthetic metaorder generation procedure, inspired by the findings in \cite{maitrier2025doublesquarerootlawevidence}.

The approach originates from proprietary data on the Tokyo Stock Exchange, where the authors had access to full trade-level information, including trader identifiers. They observed that individual traders tend to execute consecutive trades in the same direction, forming buy or sell sequences. Metaorders were identified by aggregating consecutive trades of the same sign per trader, resetting upon sign reversal. The authors of \cite{maitrier2025} noticed that the presence of trader IDs was not essential to reproduce this behavior. Instead, one can simulate such identifiers by randomly assigning trades to synthetic IDs, thereby generating proxy metaorders via the following algorithm:
\begin{enumerate} 
\item Assign to each trade a random integer $n_T \in \{0,1,\ldots,N_T-1\}$ where $N_T$ is a pre-determined fixed positive integer. The integers are sampled uniformly from this finite set, representing synthetic trader IDs.
\item Group all trades with the same assigned $n_T$ and sort them in chronological order. 
\item Partition each sequence into metaorders by aggregating trades with identical signs (buy or sell), terminating the sequence upon sign change. \end{enumerate}

Table~\ref{table:ex_metaproxy} illustrates this procedure on a toy example. For instance, trades with $n_T = 0$ form a consistent sell-side sequence $\{-1,-1,-1 \}$, corresponding to a single metaorder with three child orders. In contrast, trades with $n_T = 1$ exhibit a sign change, resulting in two separate metaorders with one and two child orders, respectively. The average length of a metaorder is influenced by both the chosen value for $N_T$ and the empirical rate of sign changes in the underlying trade flow.

Figure~\ref{fig:meta_peak} displays the peak impact, as defined in \eqref{eq:peak}, for synthetic metaorders on TBOND, 10USNOTE, EUROSTOXX, DAX, and CORN0 futures, filtered to include only those with at least four child orders. All observed impact curves exhibit concave scaling behavior, consistent with a power-law relation. Empirically, the impact function fits a square-root law with exponent in the range $\dl\in [0.5,0.7]$. The metaorder volume range is bounded below by microstructural properties (notably for large-tick assets such as DAX) and above by the decreasing likelihood of long, uninterrupted sign sequences in the trade flow. In practice, the upper bound is observed around 1\% of the daily traded volume.

\begin{table}[H]
\centering
\begin{tabular}{llll}
\toprule
Time & Volume Traded & Trader ID ($n_T$) & Metaorder Assignment \\
\midrule
10:05:011  & -1 & 0 & 1 \\
10:06:123  & -1 & 1 & 2\\
10:06:509  & -1 & 2 & 3\\
10:07:205  & -1 & 0 & 1\\
10:07:388  & 1 & 2 & 4\\
10:07:434  & 1 & 3 & 5\\
10:07:786  & -1 & 1 & 2\\
10:08:657 & -1 & 3 & 6\\
10:09:476  & -1 & 0 & 1\\
10:09:567  & 1 & 1 & 7\\
\bottomrule
\end{tabular}
\caption{Example of synthetic metaorder construction via randomized trader ID assignment and sign-based grouping.}
\end{table}\label{table:ex_metaproxy}

\begin{figure}[H] \centering \includegraphics[width=0.85\linewidth]{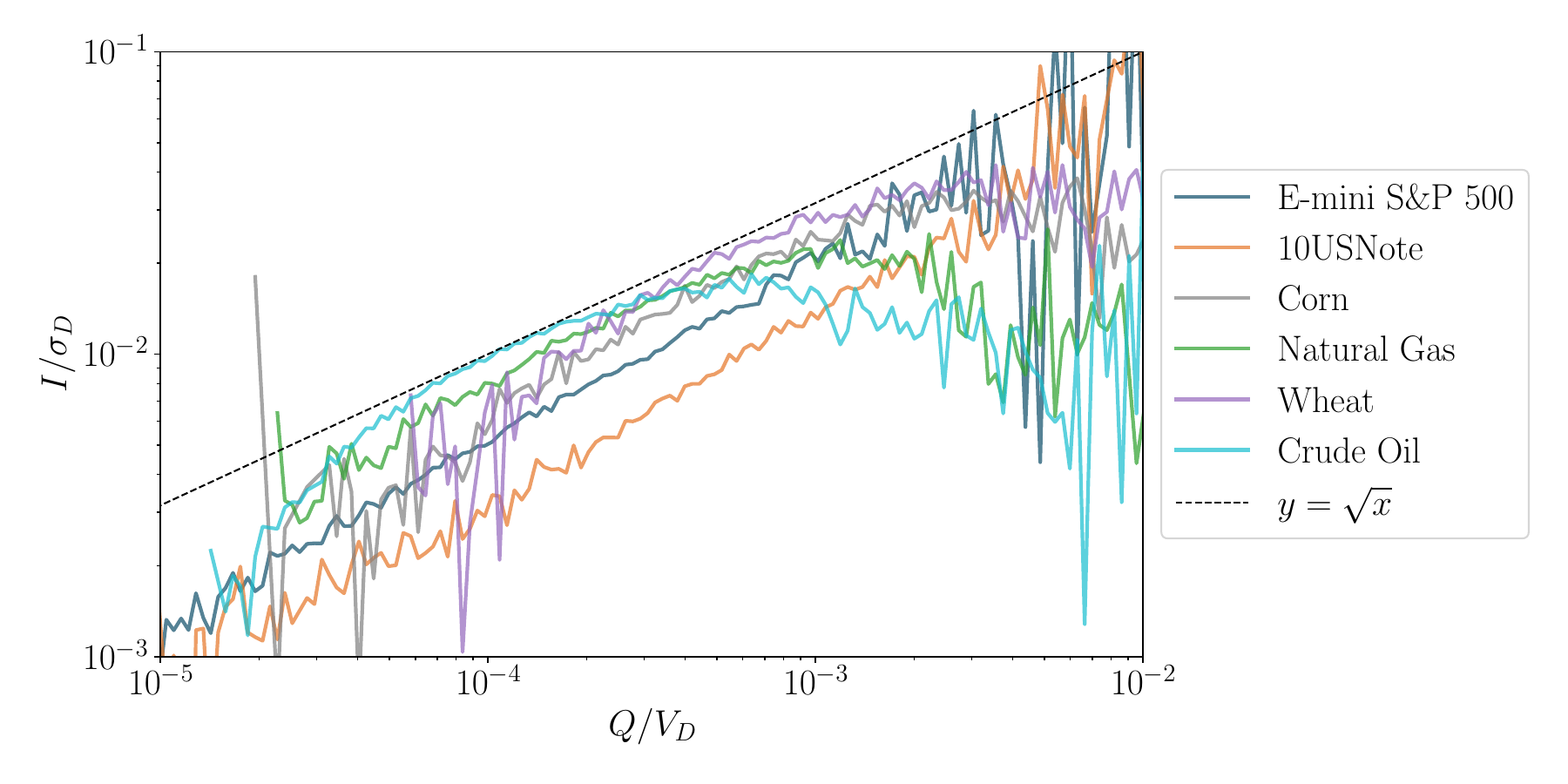} \caption{For various futures contracts, the synthetic meta-order impact rescaled by daily volatility $I/\sigma_D$ is plotted against its volume fraction $Q/V_D$ in log-log scale. The dashed black curve represents square-root scaling. The characteristic concave shape of the impact function is reproduced and holds broadly across assets.} \label{fig:meta_peak} \end{figure}

Choosing $N_T$ requires careful calibration to the microstructural properties of the asset under study. For small $N_T$ (e.g., $N_T \approx 1$), artificially long metaorders are generated, which are unrealistic in practice. Conversely, large $N_T$ (e.g., $N_T \approx $ number of trades per day) results in metaorders with trades sparsely distributed over the day, diluting the temporal structure. Thus, intermediate values of $N_T$ must be selected to capture liquidity and volatility features of the asset.

The minimum and maximum feasible metaorder sizes depend on both the asset's tick size and its intraday trade frequency. Figure \ref{fig:proxy_tick} demonstrates that the smallest possible metaorder volume is influenced by the average trade size in the underlying asset. For large-tick assets, where the average trade size is relatively small, fewer shares are needed to achieve a significant dollar position, resulting in smaller metaorders. This suggests an inverse relationship between the minimum volume of metaorders and the average trade size of the asset. Consequently, the computational thresholds, which vary between $10^{-6}$ and $10^{-4}$, as shown in Figure \ref{fig:meta_peak}, can be explained by the average tick size. Assets such as the DAX and CORN contracts exhibit larger thresholds compared to their smaller tick counterparts.

\begin{figure}
	\center
		\includegraphics[width=0.6\linewidth]{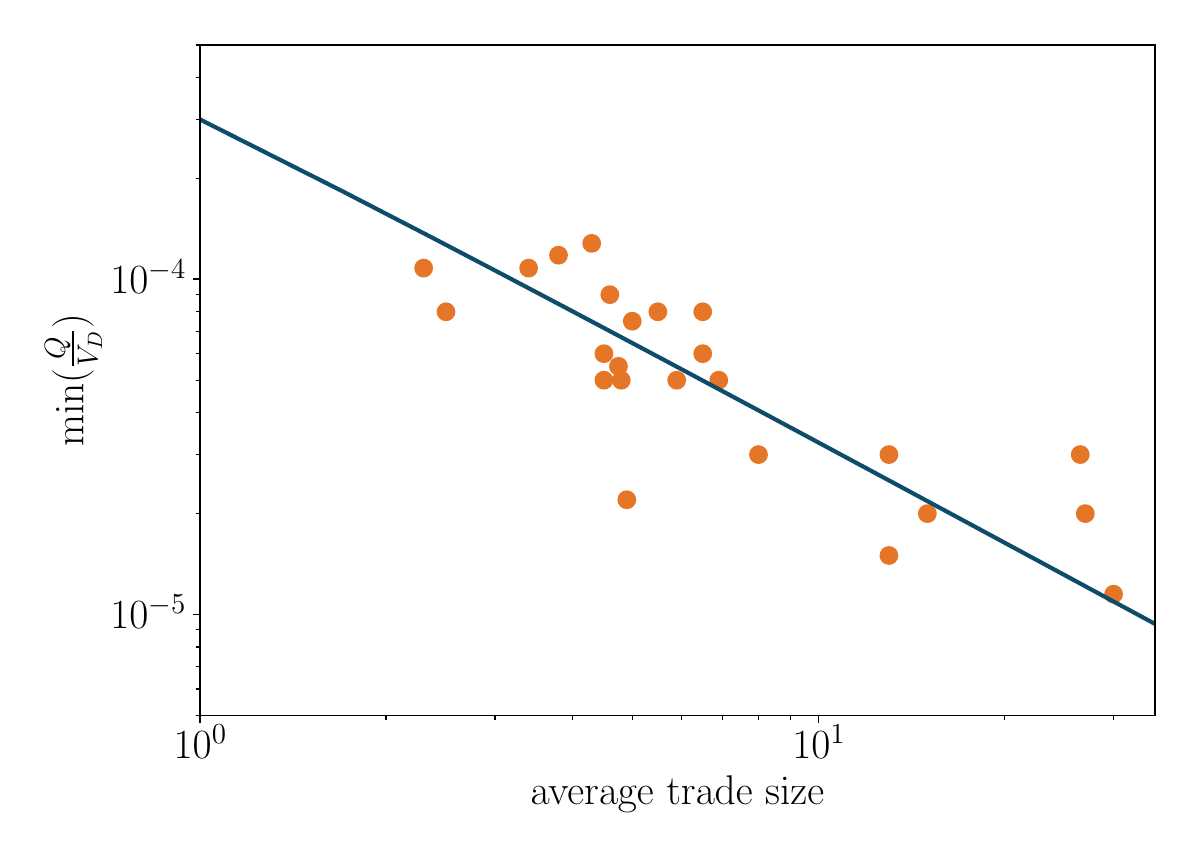}
\caption{
Minimum metaorder size as a function of the average trade size in the underlying asset. The metaorder volume depends on the tick-size of the underlying asset; large ticks require less trades to accumulate a large volume.}\label{fig:proxy_tick}
\end{figure}

\subsection{Methodology}\label{sec:methods}

\subsubsection*{Metaorders on futures}
All metaorder-based analyses, whether from proprietary execution data or synthetic proxies, are conducted on a uniform 5-minute time grid.  Even though some futures on the CME are traded 23 hours a day, we restrict ourselves to the most liquid trading hours that begin with the agricultural futures market opening at 9:30 AM ET and extends through 59 subsequent 5-minute bins, concluding at 1:20 PM. Return observations are constructed from this reference point. The corresponding volume process is set to zero in time bins where no trades occur. Assuming a Time-Weighted Average Price (TWAP) strategy, volumes during execution are presumed to be evenly distributed across the 5-minute bins.

To estimate cross-impact across multiple instruments, we analyze specific pairs and triplets of futures contracts that are have a correlation of more than 90\%. This involves systematically pairing each front-month contract with its subsequent contract. For contracts nearing expiration, we exclude trading activity during the settlement period to avoid the peculiarities associated with end-of-life contract dynamics. This practice helps maintain the accuracy of cross-impact calibration.


\subsubsection*{Order flow imbalance in equities}
All analyses based on order flow imbalance are carried out on an equidistant 10-second time grid. We partition the 6.5-hour trading session of the S\&P 500, spanning from 9:30 AM to 4:00 PM ET, into 2340 distinct 10-second bins. This consistent partitioning aligns with established research \cite{cont2014price,muhle2024stochasticcross,muhle2024stochastic}, balancing the need for precision with data and absence of microstructure effects. The order flow imbalance is quantified as the cumulative sum of signed trade volumes within each bin. 

From the original pool of 503 S\&P 500 stocks as of December 31, 2024, we focus exclusively on constituents exhibiting active limit order book presence in at least 1800 bins per day. This condition narrows our analysis to 197 stocks. We exclude the final half-hour of the trading day to avoid the heightened volatility seen then, but retain the opening hour since price impact builds up most strongly during this period and is therefore essential for intraday decay estimation, resulting in a refined data set of 2160 bins per day.

To estimate cross-impact across multiple stocks, we analyze the stock pair Coca-Cola (KO), PepsiCo (PEP) and the stock triplet triplet ConocoPhillips (COP), Chevron (CVX), Exxon Mobil (XOM), which exhibit pairwise 10-second return correlations of 50-70\% throughout 2024. In contrast to the case of metaorders on futures, we don't have to account for expiries, settlements, or other contract dynamics here. 

In line with \cite{muhle2024stochasticcross}, to assess the predictive power when introducing cross-impact, we consider a market portfolio and assess the prediction  performance of a bivariate cross-impact model which captures cross-impact between the market portfolio and the stocks, instead of focusing on specific stock pairs. Here the market portfolio is constructed as a weighted sum, with prices and trades computed by weighting prices and trades of the individual assets by traded notional in the respective bin. Subsequently, the market portfolio is treated like an asset for the model calibration; see \cite{muhle2024stochasticcross} for details.

\subsubsection*{Normalization}
To correct for intraday volatility patterns in all datasets, such as higher volatility at the open and close, returns are scaled by a modified Garman–Klass volatility estimator. Recalling Definition~\ref{def:dataset}, given an asset $\ell\in\{1,\ldots d\}$ and a time interval $[t_j,t_{i+1}]$ for $0\leq j\leq i\leq M-1$, consider
$$
\sigma^{\ell}_{[t_j,t_{i+1}]} := \frac{1}{3}(\text{High} - \text{Low})_{[t_j,t_{i+1}]}^{\ell} + \frac{2}{3}|\text{Last} - \text{First}|_{[t_j,t_{i+1}]}^{\ell},
$$
where "High" and "Low" represent the highest and lowest price within the respective interval and "First" and "Last" represent the mid-prices at the beginning and end of the interval. As bin durations increase, volatility scales with the square root of time. This becomes apparent from the first subplot in Figure~\ref{fig:intraday_patterns}, where the two curves represent the 10-second bin volatility and the cumulative volatility of AAPL, respectively.

As for volatility, trading volume exhibits a comparable intraday pattern. The lower panel in Figure~\ref{fig:intraday_patterns} highlights the traded volume fraction specifically for futures contracts. Notably, there are peaks in trading activity associated with predictable news events, such as the Weekly Petroleum Status Report and the World Agricultural Supply and Demand Estimation (WASDE) Report. These peaks demonstrate the significant influence of scheduled announcements on liquidity patterns throughout the trading day.

During periods of high liquidity, the same trade volume results in a smaller price impact compared to periods of low liquidity. To adjust for this variance, trade volumes are normalized using the average volume profile corresponding to each time of day. Specifically, let $V_D$ denote the total daily traded volume and $V_{t_i} $ the volume in a given bin $[t_i,t_{i+1}]$. Then, we normalize the volume $(Q_{t_i}^{\ell})^{(n)}$ in an asset $\ell\in\{1,\ldots,d\}$ at timestep $t_i\in\mathbb{T}$ in episode $n\in\{1,\ldots ,N\}$ as follows:
$$(\tilde{Q}_{t_i}^{\ell})^{(n)} := \frac{(Q_{t_i}^{\ell})^{(n)}}{(V_D^{\ell})^{(n)}} \frac{1}{w}\sum_{r = n-w+1}^n \frac{(V^{\ell}_D)^{(r)}}{(V^{\ell}_{t_i})^{(r)}},$$
where samples are chronologically ordered in time and the rolling window mean is defined as $w: = \text{min}(n,20)$ days. To account for different concavities, we recall the class of impact functions $h_c$ from \eqref{eq:hc}, and introduce the notations $h_{c_S}$ and $h_{c_X}$ to differentiate between self- and cross-impact, respectively. 
Taking into account the aforementioned normalization and specification of the impact function, the propagator model \eqref{eq:returns} takes the form

\begin{equation}\label{eq:propagator_normalized}
    \frac{P^{\ell}_{t_{i+1}} - P^{\ell}_{t_0}}{\sigma^{\ell}_{[t_0,t_{i+1}]}} = \sum_{j=0}^i G^{({\ell},{\ell})}_{i-j}h_{c_S}\Big(\tilde{Q}^{\ell}_{t_j}\Big) + \sum_{k\neq \ell}^d\sum_{j=0}^i G^{(\ell,k)}_{i-j}h_{c_X}\Big(\tilde{Q}^k_{t_j}\Big),\quad i = 0,\ldots,M-1,\ \ell=1,\ldots d,
\end{equation}
where we omit the superscript $(n)$ for simplicity. Unlike in previous studies such as \cite{Patzelt_2017,taranto2016linearmodelsimpactorder}, the kernel $\boldsymbol{G}$ in our setup does not absorb the volatility term. This normalization ensures robustness with respect to noisy estimates for small bin sizes, and allows the impact kernel to be estimated from inputs that are free from seasonal intraday patterns in volatility and liquidity. 

\begin{figure}
    \begin{subfigure}{\textwidth}
        \includegraphics[width=0.915\textwidth]{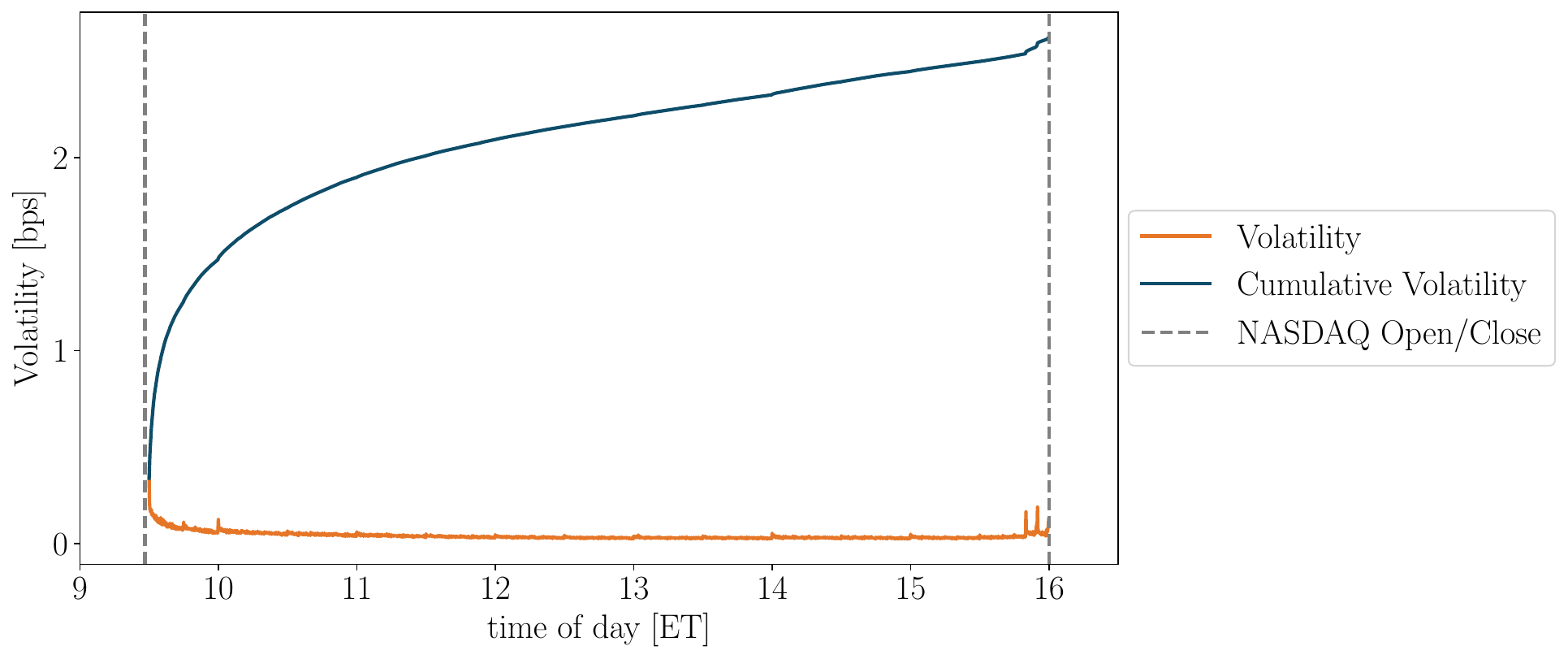}  
    \end{subfigure}
        \begin{subfigure}{\textwidth}\label{subfig:volumepattern}
        \hspace{0\textwidth}
        \includegraphics[width=\textwidth]{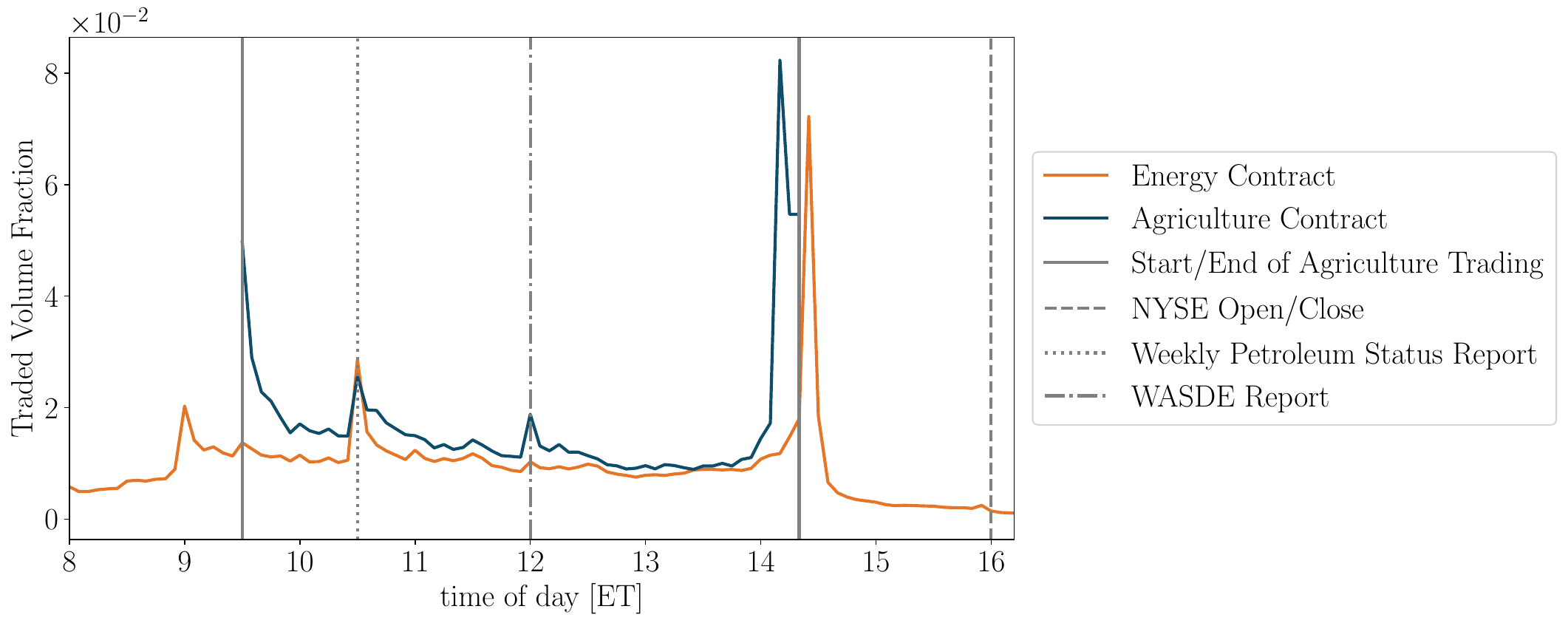}
    \end{subfigure}
        \caption{Upper panel: the volatility and cumulative volatility patterns of AAPL stock throughout NASDAQ trading hours, emphasizing varying impacts over different trading periods. Vertical lines mark the NASDAQ opening and closing times. Lower panel: the intraday traded volume profile for energy (orange) and agriculture (blue) futures contracts. Key events such as the Weekly Petroleum Status Report and the WASDE Report influence trading patterns and are indicated with vertical lines. Vertical lines also indicate trading times allowing for liquidity comparisons at different times of the day. Both volume and volatility normalizations are essential for capturing intraday variations in liquidity and return, with each contract analyzed over a rolling 20-day window.}
    \label{fig:intraday_patterns}
\end{figure}

\subsubsection*{Predictive Power}
To assess the predictive power of the estimated impact kernels, we employ different strategies tailored to each dataset type.

For the metaorder datasets (both real and synthetic), we estimate the kernel on the full sample rather than splitting it into separate training and test sets, given the limited data available. This choice ensures that calibration incorporates all observed metaorders, providing the most reliable estimates in this context. Model performance is then evaluated through the coefficient of determination $R^2$, which allows for direct comparison across specifications.

In contrast, for the order flow datasets, we evaluate the explanatory strength of the impact kernels by examining both in-sample and out-of-sample predictive performance. We adopt a rolling time-based evaluation scheme similar to that in \cite{muhle2024stochasticcross}. Here, models are trained on one month of data and validated on the subsequent month, repeated over a one-year span. The model performance is then measured across the two prediction horizons $h = 1$ minute and $h = 5$ minutes, that is, assuming the traded volumes are known, prices are predicted 1 minute or 5 minutes in advance using the estimated models. The resulting in-sample $R^2$ values are averaged over 12 months, and the out-of-sample $R^2$ values are averaged over 11 months.

\subsubsection*{Price Impact Models}
We compare the performance of the nonparametric estimator with classical parametric propagator models, where all models share the general form \eqref{eq:propagator_normalized}. Given an asset pair $(\ell, k) \in \{1,d\}^2$ and $0\leq j\leq i\leq M-1$, we consider the parametric kernels
\begin{itemize}
\item \textbf{Single-Exponential (1-EXP):} $G_{i-j}^{(\ell,k)} = Y e^{-\rho(t_{i}-t_j)}$;
\item \textbf{Double-Exponential (2-EXP):} $G_{i-j}^{(\ell,k)} = Y \Big(w_1 e^{-\rho_1(t_{i}-t_j)} + (1-w_1)e^{-\rho_2(t_i-t_j)}\Big)$;
\item \textbf{Power-Law (POWER):} $G_{i-j}^{(\ell,k)} = Y (t_i-t_j + \tau)^{-\beta}$;
\end{itemize}
where $Y$ is a constant of order one as in \eqref{eq:peak}, and the two nonparametric kernels
\begin{itemize}
\item \textbf{Raw Nonparametric (RAW):} $G_{i-j}^{(\ell,k)} =(\tilde{G}_{N,\lambda})_{i-j}^{(\ell,k)}$, as defined in \eqref{eq:unconstrained};
\item \textbf{Projected Nonparametric (PROJ):} $G_{i-j}^{(\ell,k)} =(G_{N,\lambda})_{i-j}^{(\ell,k)}$, as defined in \eqref{eq:constrained}.
\end{itemize}
In line with \cite{hey2023concave,muhle2024stochasticcross,muhle2024stochastic}, each of the three parametric models is fitted by fixing $\rho$, $(\rho_1,\rho_2)$, or $(\tau,\beta)$ first, and precomputing the corresponding normalized impact changes $(I_{t_{i+1}}-I_{t_i})/Y$.
Then, assuming that alpha signals average out, observed price changes can be decomposed into price impact and unaffected prices, so that $Y$ can be estimated by regressing price changes against the normalized impact changes. Finally, parameter tuning is performed via a grid search to find the optimal values $\rho^*$, $(\rho^*_1,\rho^*_2)$, or $(\tau^*,\beta^*)$. Specifically, for the grid search of the exponential models, we define the interval of half-lives $\log(\num{2})/\rho \in [30,\num{720000}]$ seconds and denote by $N_\rho$, $N_{\rho_1}$, $N_{\rho_2}$ the numbers of corresponding grid points within this interval. For the power-law models, the grid search includes decay parameters $\beta\in [0,1]$ and shifts $\tau \in [10, \num{7200}]$ seconds, with $N_{\beta}$, $N_{\tau}$ specifying the numbers of associated grid points. Table~\ref{tab:model-parameter-comparison} summarizes the degrees of freedom and tuning requirements for each model. The complexity of parametric models increases significantly as the number of grid points grows, leading to a rapid escalation in computational demands for parameter optimization.

\begin{table}[htb]
\centering
\begin{tabular}{||c|c|c|c||}
\hline
Model & Free parameters & Tuned parameters &  Grid points \\
\hline\hline
1-EXP & $Y,\rho$ & $\rho$  &$N_{\rho}$\\
2-EXP & $Y, w_1,\rho_1,\rho_2$ & $\rho_1,\rho_2$ &  $N_{\rho_1}N_{\rho_2}$\\
POWER & $Y,\beta,\tau$ & $\beta,\tau$ &  $N_{\beta}N_{\tau}$ \\
RAW & $(\tilde{G}_{i})_{i=0}^{M-1}$ & None  & 1\\
PROJ & $(G_{i})_{i=0}^{M-1}$ & None  & 1\\
\hline
\end{tabular}
\caption{The free parameters per propagator, tuned parameters per propagator, and the number of grid points per propagator for the different models. Here, $N_{\rho}$, $N_{\rho_1}$, $N_{\rho_2}$, $N_{\beta}$, and $N_{\tau}$ refer to the number of grid points for $\rho$, $\rho_1$, $\rho_2$, $\beta$, and $\tau$, respectively.} 
\label{tab:model-parameter-comparison}
\end{table}

\subsection{Understanding Metaorder Impact}\label{sec:results}
This section examines the impact of metaorders across different model specifications, as detailed in Tables \ref{tab:metaorder-model-comparison}, \ref{tab:metaorder-crossimpact-comparison} and \ref{tab:R2}. Our key findings are the following: 

\begin{enumerate}
    \item Self-impact is shown to be concave and to decay across multiple timescales (a relative $R^2$ improvement of 
    $94.4\%$ from a linear to a concave model and $1.06\%$ for the nonparametric estimator as compared to exponential decay). The nonparametric estimators supports power-law or multi-exponential decay without assuming any specific kernel shape, as stated in Table~\ref{tab:metaorder-model-comparison} and illustrated in Figure~\ref{fig:cfm_enhanced_single}.
    \item The metaorder proxy enhances decay calibration when data is sparse, as can be seen in Table \ref{tab:R2} (a relative $R^2$ improvement of $14.29\%$). Data enhancement offers an improved predictive accuracy by providing smoother and more reliable decay profiles. 
    \item Table \ref{tab:metaorder-crossimpact-comparison} highlights that cross-impact is concave (a relative $R^2$ improvement of $1.74\%$ from linear to concave cross-impact). The predictive power also increases with the number of assets that are considered, as shown in Table \ref{tab:R2} (a relative $R^2$ improvement of $31.25\%$ for one cross-asset and $35.45\%$ for two cross-assets). Cross-impact can reflect combined effects of self-impact responses and order flow imbalances. Figure~\ref{fig:R2_concave} illustrates the models performance ratio when varying the cross-impact concavity parameter. 
\end{enumerate}
In the following two subsections, we provide additional details about our results on self- and cross-impact estimations.

\subsubsection{Self-Impact Nonparametric Estimation} 
We begin by evaluating self-impact using the original CFM metaorder dataset, which contains trades on energy and agricultural futures contracts without synthetic metaorders. A key element of this analysis is the concavity of the self-impact function $h_{c_S}$ from \eqref{eq:propagator_normalized}, parameterized by $c_S$. For each value of $c_S$, we compute the coefficient of determination $R^2(c_S)$ to assess predictive accuracy. The point estimate $\hat c_S$ denotes the value of $c_S$ that maximizes $R^2(c_S)$. Figure~\ref{fig:R2_concave} shows the relative performance ratio $R^2(c_S)/R^2(\hat c_S)$ for the single-asset model (blue). The curve peaks around the square-root case ($c_S \approx 0.5$). The value significantly decreases for higher or lower values in line with the results of \cite{hey2023cost}. 

\begin{table}[htb]
\centering
\begin{tabular}{||c|c|c||}
\hline
Propagator & Concavity $c_S$ & $R^2$ \\
\hline
\hline
1-EXP & 1 &0.53\% \\
2-EXP & 1 &0.54\%  \\
POWER & 1 &0.53\%  \\
RAW & 1 &0.54\%  \\
PROJ & 1 &0.54\% \\
\hline
1-EXP & 0.5 &1.036\% \\
2-EXP & 0.5 &1.049\% \\
POWER & 0.5 &1.044\%  \\
RAW & 0.5 &1.05\% \\
PROJ & 0.5 &1.049\%  \\
\hline
\end{tabular}
\caption{Performance of linear and square-root propagator models on the original CFM metaorder dataset. Linear models perform approximately half as well. Models with multiple transient impact decay timescales perform best.}
\label{tab:metaorder-model-comparison}
\end{table}

We then compare various decay kernel specifications under linear and square-root impact. Table~\ref{tab:metaorder-model-comparison} summarizes the $R^2$ values in \% of explained variance for all considered models. Among these, the nonparametric kernel (without projection) achieves the highest performance of approximately 1.05\% as expected. Notably, the projected nonparametric kernel performs equally well as the parametric multi-timescale models, despite requiring no tuning.

Overall, the relatively modest $R^2$ values align with expectations given CFM's limited trading volume relative to the broader market. The measured values highlight the subtler influence of CFM's trading activities, falling within the anticipated range of $0.1-1$\% that corresponds to the average volume fraction traded in the time interval.

\begin{figure}
    \centering
    \includegraphics[width=0.7\linewidth]{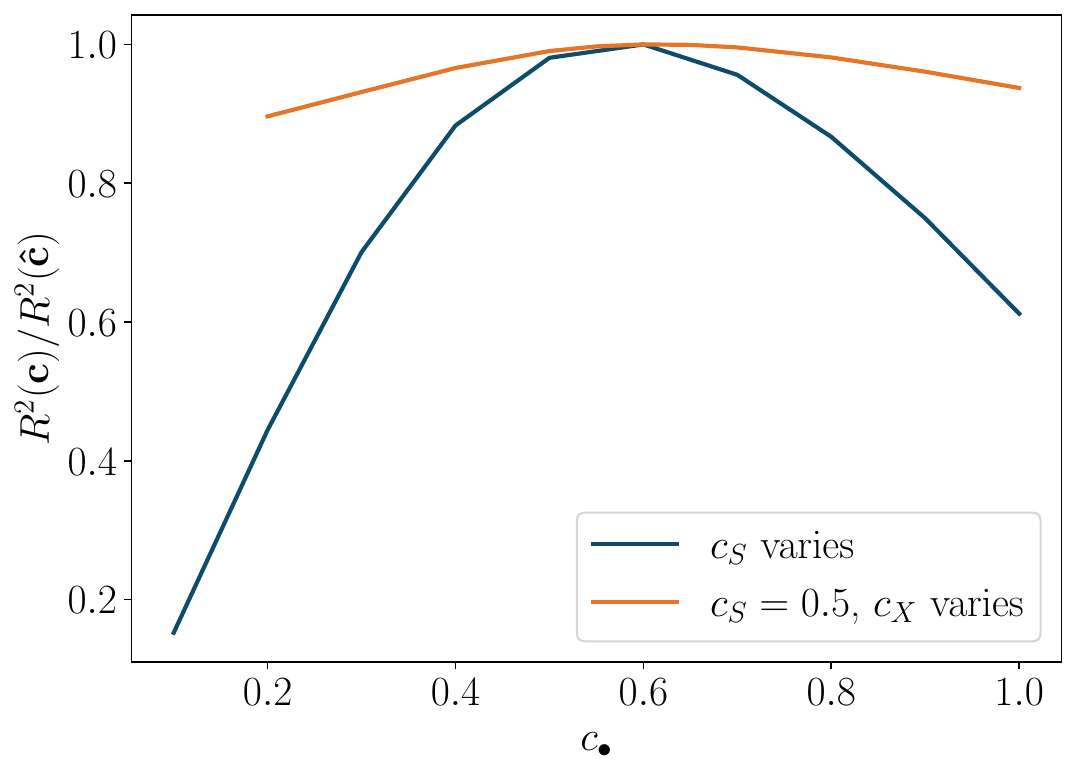}
    \caption{$R^2$ performance as a function of the concavities $c_\bullet$ for self- and cross-impact. The blue curve represents $R^2$ for self-impact models with varying concavity parameters $c_S$. The orange curve shows $R^2$ for cross-impact models, fixing $c_S = 0.5$ while varying $c_X$. Both curves peak around $0.5-0.6$.}
    \label{fig:R2_concave}
\end{figure}

\subsubsection{Cross-Impact Estimation}

We now extend the analysis to cross-impact using metaorders on multiple contracts with different expiries. The goal is to understand how the execution of one asset affects the price of another and whether such effects can improve prediction performance beyond self-impact alone.

To evaluate the predictive value of cross-impact, we compare two classes of models. First, we consider the self-impact-only models RAW and PROJ, which estimate the kernels based on the traded volume of each asset individually. We recall the notions of raw and projected estimators arise from \eqref{eq:unconstrained} and \eqref{eq:constrained-rewritten} respectively. Specifically, we calibrate the self-impact kernel components in the first term of \eqref{eq:propagator_normalized} for each asset $\ell\in\{1,\ldots, d\}$, and set the cross-impact kernel components in the second term of \eqref{eq:propagator_normalized} to zero. Second, we consider the cross-impact models CROSS-RAW and CROSS-PROJ, incorporating both self-impact and cross-impact by accounting for influences from contracts with the next expiry. In these models, we calibrate all kernel components in \eqref{eq:propagator_normalized} 
across all asset pairs $(\ell,k)\in\{1,\ldots d\}^2$. 

\begin{table}[htb]
\centering
\begin{tabular}{||c|c|c||}
\hline
Propagator & Concavity $c_X$ & $R^2$ \\
\hline
\hline
RAW & -- & 1.05\%  \\
PROJ & -- & 1.049\%  \\
CROSS-RAW  &1 & 1.15\% \\
CROSS-PROJ & 1 & 1.13\%  \\
CROSS-RAW & 0.5 & 1.17\% \\
CROSS-PROJ & 0.5 & 1.15\%  \\
\hline
\end{tabular}
\caption{Performance of self and cross-impact models on metaorder data using projected (PROJ) and raw (RAW) kernels. The concave cross-impact model fits best.}
\label{tab:metaorder-crossimpact-comparison}
\end{table}

Table~\ref{tab:metaorder-crossimpact-comparison} summarizes the $R^2$ values for these models, using both a square-root and linear function $h_{c_X}$ for the cross-impact term while keeping $c_S = 0.5$. The projected square-root cross-impact kernel improves $R^2$ from 1.05\% (self-impact only, where $c_X$ is not needed) to 1.17\%. Interestingly, when the cross-impact kernel is assumed to be linear, performance decreases to 1.15\%, showing that cross-impact is concave as suggested in \cite{hey2024cross}. The concavity of cross-impact is further investigated in Figure \ref{fig:R2_concave} where the orange curve plots the models' performance ratio across various concavity parameters $c_X$. When integrating both self- and cross-impact contributions, peaks are observed around $0.5-0.6$.

Nonetheless, accurately measuring $c_X$ is challenging due to cross-impact being a second-order effect. While data suggests that both concavities peak in the same range, determining whether $c_X > c_S$ remains elusive given the low signal-to-noise ratio. Cross-asset price adjustments might occur not only in response to price changes initiated by self-impact but also due to observed order flow imbalances. This suggests that the cross-impact relationship could exhibit characteristics of both square-root and linear effects, reflecting a convolution of these dynamics.

\subsubsection{Impact Estimation with the Enhanced Dataset}

Figure~\ref{fig:cfm_enhanced_single} compares the impact kernel estimated from the trades on corn futures (gray) with that from the proxy-enhanced dataset (blue). The kernel from raw CFM metaorders appears irregular and noisy, whereas the extended sample yields a smooth, monotonic decay consistent with known empirical regularities. Figure~\ref{fig:Kernel1dloglog} further illustrates this in a log-log scale. After rescaling the kernel by a factor of $\sqrt{t}$ to correct for volatility normalization in the returns, it follows a near-linear trend with a slope of $-0.5$, consistent with the square-root law observed in \cite{Bouchaud2004,Bouchaud2015}.

While concavity in volume is the dominant contribution to the predictive power of the impact-model, the decay is a second-order effect, consistent with findings in \cite{hey2023concave}. The first $3$ rows of Table~\ref{tab:R2} show that $R^2$ improves when including proxy data.

\begin{table}[htb]
\centering
\begin{tabular}{||c|c|c|c|c||}

\hline
Concavity $c_S$ & Concavity $c_X$ & Assets & Enhanced & $R^2$ \\
\hline
\hline
1 & -- & 1 & \xmark & 3.2\% \\
0.5 & -- & 1 & \xmark & 4.6\% \\
0.5 & -- & 1 & \cmark & 4.8\% \\
\hline
0.5 & 0.5 & 2 & \cmark & 6.3\% \\
0.5 & 0.5 & 3 & \cmark & 6.5\% \\
\hline
\end{tabular}
\caption{Performance of self and cross-impact models  using the nonparametric raw estimator (RAW). The table compares models with different numbers of assets, concavity parameters, and indicates whether the dataset is enhanced with synthetic metaorders.}
\label{tab:R2}
\end{table}

\begin{figure}[ht]
	\center
		\includegraphics[width=0.6\linewidth]{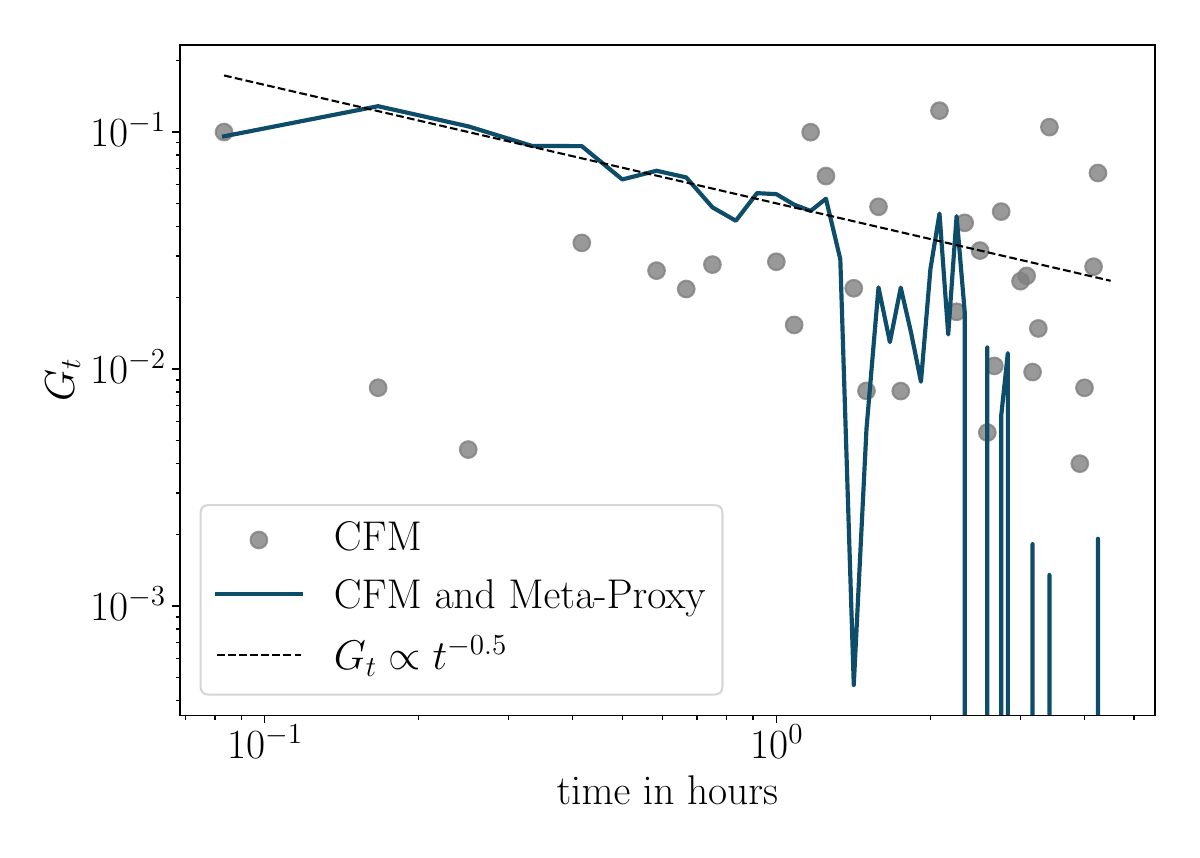}
\caption{Impact kernel estimates $G:=G^{(\ell,\ell)}$ for square-root self-impact with (blue) and without proxy-enhanced date (grey) in log-log scale. The proxy-enhanced kernel, after rescaling, decays smoothly and can be approximated by a power-law propagator with an exponent of -0.5. }\label{fig:Kernel1dloglog}
\end{figure}

Next, we consider nonparametric estimations for multi-asset cross-impact models on the enhanced dataset of corn futures. The two-asset model shows that including a second highly correlated contract increases $R^2$ from 4.8\% to 6.3\%, a substantial gain over the self-impact-only specification. The three-asset model increases $R^2$ further to 6.5\% (see the last two entries of Table~\ref{tab:R2}).

The two left panels in Figure~\ref{fig:MultiAsset} illustrate the estimated propagators for two corn futures with different expirations, indexed by 0 and 1 (see Section~\ref{sec:dataset} for details on the index notation). We observe that the kernels are asymmetric: the impact of Corn0 on Corn1 is stronger than the reverse. This asymmetry reflects liquidity differences across assets, with Corn1 being less liquid. Our findings are consistent with previous observations on immediate aggregate order flow impact in \cite{coz2023cross}. However, while that study focuses on estimating the peak cross-impact kernel as a static object varying with liquidity differences, we go further by directly calibrating the full propagator of metaorders, which captures the dynamics of liquidity differences over time.

The right-hand side of Figure~\ref{fig:MultiAsset} extends the estimation to three assets. While the qualitative patterns persist, the estimates become less stable. The width of the confidence intervals derived in Theorem~\ref{thm:main} increases substantially for a single asset to a multi-asset model. This confirms that cross-impact estimation is subject to high variance, but the structure of the kernels remains interpretable. In particular, cross-impact can exceed self-impact in magnitude, depending on relative liquidity.

\begin{figure}[!htb]
    \centering
    \includegraphics[width=\linewidth]{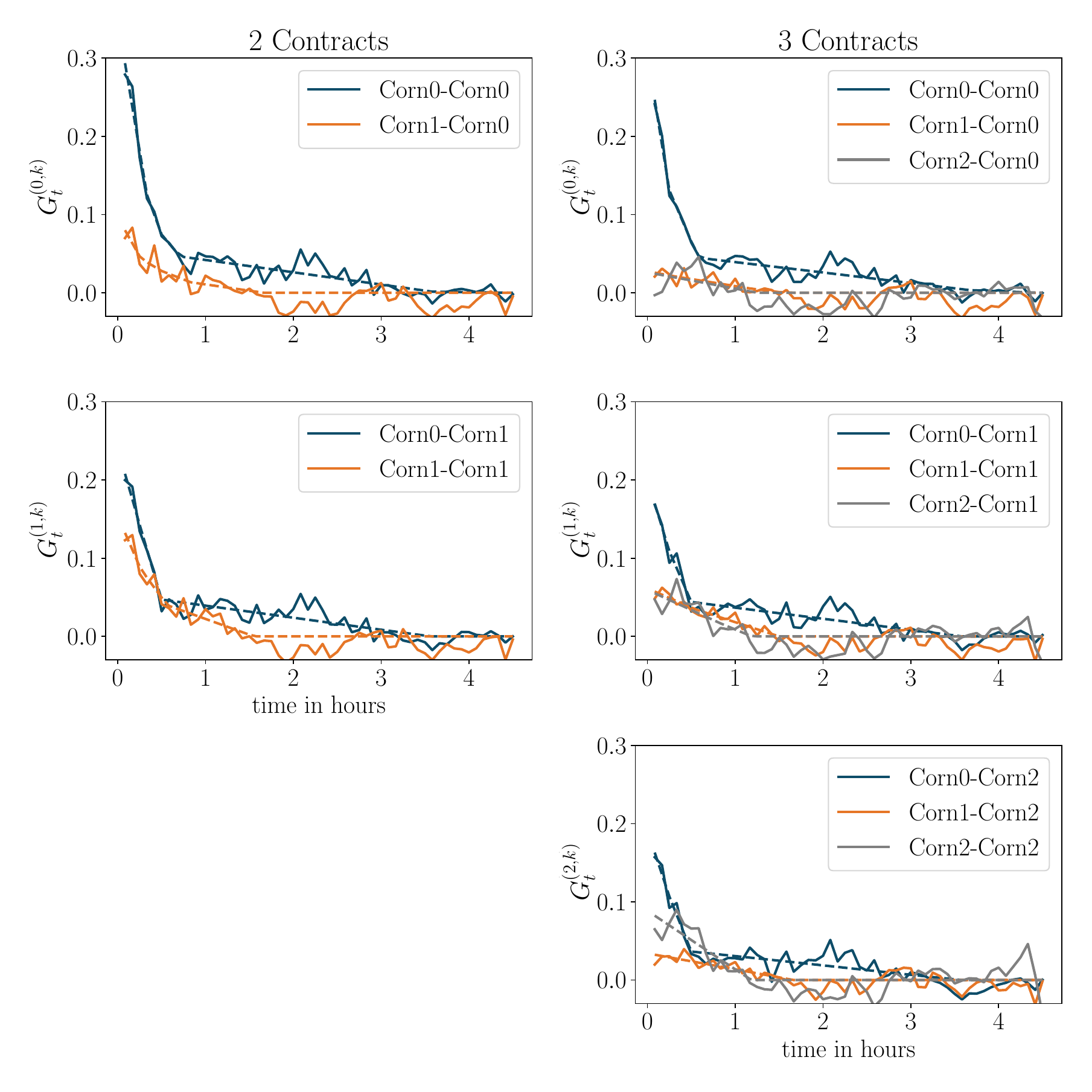}
\caption{Cross-impact kernels are estimated for corn future contracts with different expiries. The solid and dashed lines depict the raw and projected kernels, respectively. Left column: results are shown for a two-asset model, where cross-impact is estimated from Corn0 (blue) to itself (above) or to Corn1 (below) and from Corn1 (orange) to Corn0 (above) or to itself (below). The different cross and self-impact ratios reflect liquidity differences. Right column: for the three-asset model, cross and self-impact kernels are similarly estimated across Corn0, Corn1, and Corn2 (gray). The relative importance of these kernels varies with liquidity, and increasing dimensionality leads to less stable calibrations.} 
    \label{fig:MultiAsset}
\end{figure}

\begin{figure}[!htbp]
    \centering
    \includegraphics[width=\textwidth]{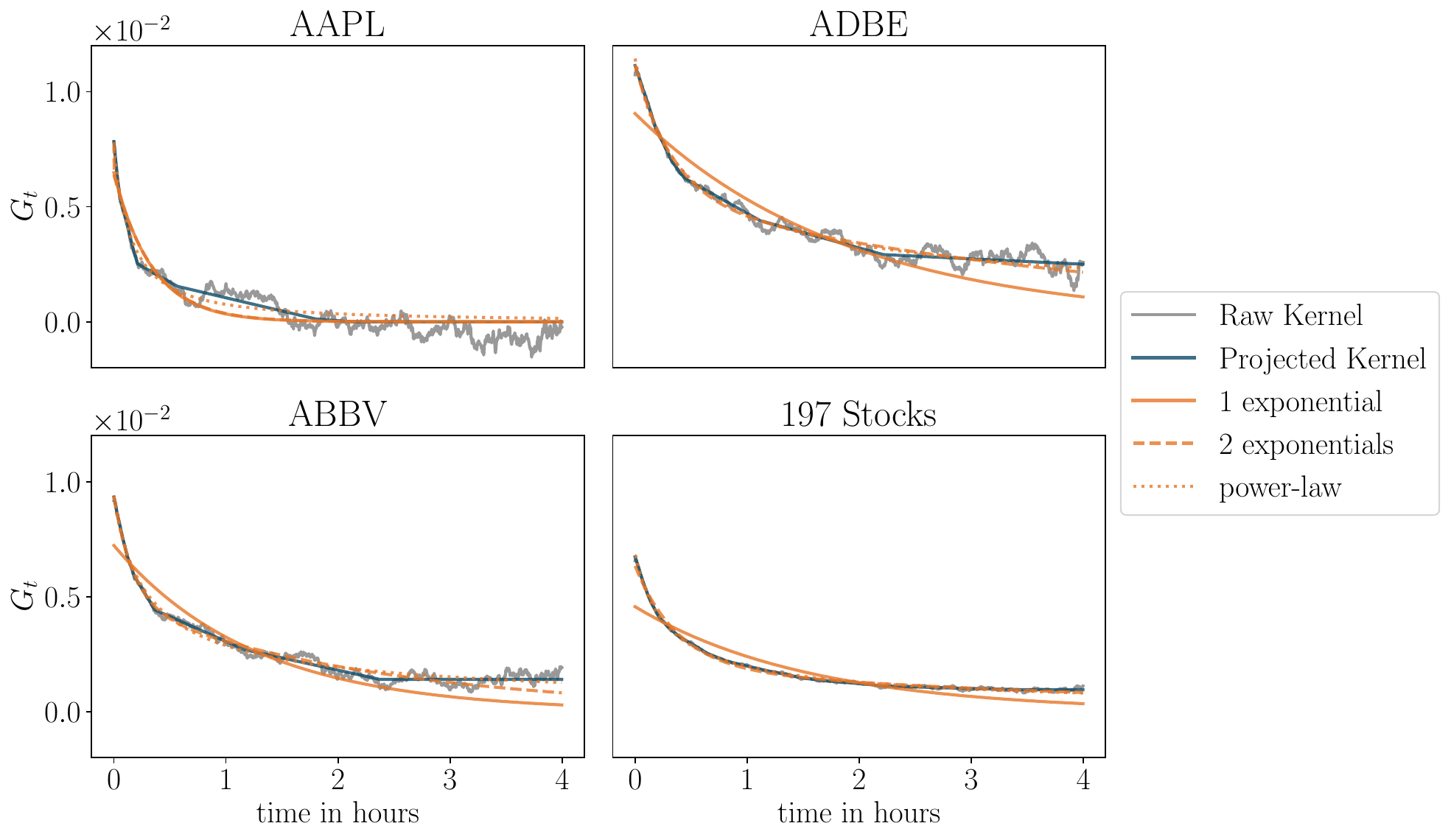}
    \caption{Comparison of impact decay kernels across selected stocks using various estimators. The subplots display the estimated kernel shapes for AAPL, ADBE, and ABBV, alongside an aggregated analysis across multiple stocks. Each subplot includes the raw kernel, projected kernel, and parametric fits using one exponential, two exponentials, and a power-law.}
    \label{fig:kernels}
\end{figure}

\subsection{Understanding Aggregate Impact}\label{sec:empirics-agg}
We now turn our attention to the analysis based on order flow imbalance. Our main results using public market data of S\&P 500 stocks are as follows:

\begin{enumerate}
\item Self-impact is concave (a relative out-of-sample $R^2$ improvement of $75.98\%-183.89\%$ depending on the model and prediction horizon for $c_S=0.5$ compared to $c_S=1$, see Table~\ref{tab:model-comparison}. For the projected nonparametric kernel, we find an optimal concavity parameter of about $c_S\approx0.3$ (a relative out-of-sample $R^2$ improvement of $84.39\%-89.05\%$ depending on the prediction horizon compared to the linear case), see Table~\ref{tab:concavity-comparison}. 
\item As for metaorders, the nonparametric estimator validates a power-law decay of self-impact for aggregate order flow in an unbiased manner, that is, without imposing a parametric structure on the decay kernel, as illustrated in Figures~\ref{fig:of_imbalance_single} and \ref{fig:kernels}. In particular, we find that a shifted power-law kernel yields the best fit for the nonparametrically estimated kernel, as shown in Figure \ref{fig:of_imbalance_single}. 
\item Both the raw and projected estimated kernels achieve higher in-sample performance compared to the parametric kernels (a relative in-sample $R^2$ improvement of $0.89\%-18.64\%$ depending on the model, concavity, and prediction horizon). While the raw kernel exhibits lower out-of-sample performance than the parametric models, the projected estimated kernel performs better out-of-sample (a relative out-of-sample $R^2$ improvement of $1.86\%-24.65\%$ depending on the model, concavity, and prediction horizon), as can be seen from Table \ref{tab:model-comparison}.
\item Cross-impact induced by aggregate order flow exhibits concavity just as in the metaorder case. Nonparametric square-root cross-impact models fit better than their linear counterparts (a relative out-of-sample $R^2$ improvement of $0.87\%-2.2\%$ depending on the prediction horizon), as shown in Table~\ref{tab:model-comparison-cross}.
\end{enumerate}

\subsubsection{Self-Impact Estimation}
Figure \ref{fig:of_imbalance_single} shows the nonparametric kernel estimated universally from all 197 stocks in our dataset, confirming the power-law nature of price impact decay for US equities. Figure \ref{fig:kernels} then displays the estimated kernels for single stocks, illustrating the nonparametric estimation and suggesting a power-law decay as well.

\begin{table}[htb]
    \centering
    \begin{tabular}{||c | c | c|c | c||}
    \hline
    Propagator & Concavity $c_S$ &  Horizon & IS $R^2$    & OOS $R^2$ \\
    
    \hline
    \hline
    1-EXP & 1 & 1 & 16.36\% & 15.45\% \\
    2-EXP & 1 & 1 & 17.74\% & 15.80\%  \\
    POWER & 1 & 1 & 18.03\% & 16.27\% \\
    RAW & 1 & 1 &19.41\% & 9.05\% \\
    PROJ & 1 & 1 & 19.02\% & 16.90\% \\
    1-EXP & 0.5 & 1 & 30.28\% & 29.48\%  \\
    2-EXP & 0.5 & 1 & 30.94\% & 29.99\%  \\
    POWER & 0.5 & 1 & 30.90\% & 30.12\%  \\
    RAW & 0.5 & 1 & 31.27\% & 22.51\%\\
    PROJ & 0.5 & 1 & 31.18\% & 30.68\% \\
    \hline
    1-EXP & 1 & 5 & 11.67\% & 10.02\%  \\
    2-EXP & 1 & 5 & 12.18\% & 11.52\% \\
    POWER & 1 & 5 & 12.42\% & 11.69\%\\
    RAW & 1 & 5 & 12.98\% & 5.40\% \\
    PROJ & 1 & 5 & 12.53\%& 12.49\% \\
    1-EXP & 0.5 & 5 & 22.32\% & 20.63\% \\
    2-EXP & 0.5 & 5 & 22.51\% & 20.67\% \\
    POWER & 0.5 & 5 & 22.43\% & 20.59\% \\
    RAW & 0.5 & 5 & 23.75\% & 15.33\% \\
    PROJ & 0.5 & 5 & 23.36\% &21.98\% \\
    \hline
    \end{tabular}
    
    \caption{In- and out-of-sample performance of the linear and square-root parametric models \mbox{1-EXP}, 2-EXP, POWER and the nonparametric models RAW, PROJ for prediction horizons $h=$ 1min, 5min. The projected nonparametrically estimated kernel PROJ achieves the best prediction power.}
    \label{tab:model-comparison}
\end{table}

Table \ref{tab:model-comparison} summarizes the in-~and out-of-sample performance of the linear and square-root propagator model for different decay kernels and prediction horizons. As explained in Section \ref{sec:methods}, this includes a grid search over the tuned parameters listed in Table \ref{tab:model-parameter-comparison} for each parametric propagator. Notably, the projected nonparametrically estimated kernel achieves the best out-of-sample prediction power among the models considered. This is because the raw kernel estimates tend to be quite spiky, so the projection step effectively smoothens out high-frequency noise and substantially reduces overfitting. By contrast, in the metaorder setting  with 5-minute bins the kernel is already estimated over much coarser intervals, so applying the same projection yields only marginal gains in forecast power.

Table \ref{tab:model-comparison} indicates that the impact concavity parameter exerts a stronger influence on the prediction performance than the impact decay. Motivated by this finding, Table~\ref{tab:concavity-comparison} compares the out-of-sample performance of the projected estimated kernel for various concavity values.

Consistent with the results of \cite{muhle2024stochastic}, our analysis suggests that a concavity parameter of approximately 0.3 yields optimal prediction performance. This outcome is likely due to the fact that we aggregated the data into bins and treated the aggregate signed volume in each bin as a single order, thereby producing relatively large order sizes. These findings align with those of \cite{Lillo2003}, who observed a concavity parameter of around 0.5 for small volumes and 0.2 for large volumes in  US stocks.

\begin{table}
    \centering
    \begin{tabular}{||c|c|c|c||}
    \hline
   Concavity $c_S$ &  Horizon & IS $R^2$    & OOS $R^2$ \\
    
    \hline
    \hline
    0.5 & 1 & 31.18\% & 30.68\% \\
  0.4 & 1 & 32.05\% & 31.38\% \\
  0.3 & 1 & 33.42\% & 31.95\% \\
 0.2 & 1 & 32.46\% & 30.81\% \\
0.1 & 1 & 29.01\% & 27.30\% \\
    \hline
0.5 & 5 & 23.36\% &21.98\% \\
0.4 & 5 & 24.11\% &22.74\% \\
0.3 & 5 & 24.65\% &23.03\% \\
   0.2 & 5 & 24.17\% &22.79\% \\
0.1 & 5 & 22.50\% &20.08\% \\
    \hline
    \end{tabular}
    
    \caption{In- and out-of-sample performance of the nonparametric propagator model PROJ for different concavities $c_S=$ 0.1, 0.2, 0.3, 0.4, 0.5 and prediction horizons $h=$ 1min, 5min. A concavity parameter of 0.3 yields the highest prediction performance.}
    \label{tab:concavity-comparison}
\end{table}

\subsubsection{Cross-Impact Estimation}

\begin{table}
    \centering
    \begin{tabular}{||c |c | c | c|c | c ||}
    \hline
    Model & Concavity $c_S$ & Concavity $c_X$ &  Horizon & IS $R^2$    & OOS $R^2$ \\
    
    \hline
    \hline
    PROJ & 1 & - & 1 & 19.02\% & 16.90\% \\
    CROSS-PROJ & 1 & 1 & 1 & 21.33\% & 18.92\% \\
    PROJ & 0.5 & - & 1 & 31.18\% & 30.68\% \\
    CROSS-PROJ & 0.5 & 1 & 1 & 31.32\% & 30.90\% \\
    CROSS-PROJ & 0.5 & 0.5 & 1 & 32.39\% & 31.17\% \\
    \hline
    PROJ & 1 & - & 5 & 12.53\%& 12.49\% \\
    CROSS-PROJ & 1 & 1 & 5 & 14.58\%& 14.55\% \\
    PROJ & 0.5 & - & 5 & 23.36\% &21.98\% \\
    CROSS-PROJ & 0.5 & 1 & 5 & 23.71\% & 22.25\% \\
    CROSS-PROJ & 0.5 & 0.5 & 5 & 24.25\% &22.74\% \\
    \hline
    \end{tabular}
    
    \caption{In- and out-of-sample performance of the linear and square-root propagator models PROJ (without cross-impact) and CROSS-PROJ (with cross-impact) for horizons $h=$ 1min, 5min. The introduction of cross-impact increases predictive power, with the increase being larger for concave cross-impact.}
    \label{tab:model-comparison-cross}
\end{table}

In line with the analysis in Section \ref{sec:empirics}, we extend our framework to capture cross‐impact effects by applying the nonparametric estimation procedure described in Section \ref{sec:estimation}.  Figure \ref{fig:MultiAsset-stocks} displays the estimated cross‐impact kernels for a pair (left panels) and a triplet (right panels) of highly correlated equities.  In each case, the kernels exhibit a pronounced convex decay, and one can observe a cross-impact effect in both directions. One notable strength of the nonparametric estimator in the multi-asset case is that the number of tuned parameters remains zero for for $d$ assets, whereas it is of order $\mathcal{O}(d^2)$ for parametric kernels (see Table \ref{tab:model-parameter-comparison-d-assets}).

We also assess whether the introduction of cross-impact enhances predictive power for nonparametrically estimated impact models. For this, we follow the procedure of \cite{muhle2024stochasticcross} and compare the performance of the following two models:
\begin{itemize}
\item Single-asset projected propagator model (PROJ) as in Section \ref{sec:methods},
where the impact in each asset $\ell\in\{1,\ldots d\}$ only depends on the traded volume in that asset.
\item Multi-asset projected propagator model (CROSS-PROJ), where the impact in each asset $\ell\in\{1,\ldots d\}$ depends not only on the traded volume in that asset, but also on the  traded volume in the overall market. Here, prices, traded volumes, and volatility of the market portfolio are computed as in \cite{muhle2024stochasticcross}, treating it just like an asset. Consequently, the cross-impact propagator matrix is obtained via a nonparametric estimation followed by a projection as before, with the two underlying assets given by the asset $\ell$ and the market portfolio. 
\end{itemize}

Table~\ref{tab:model-comparison-cross} summarizes results for these models. We observe that including cross-impact terms improves the fit for linear models and slightly for square-root models. Additionally, correct concavity choice impacts predictive power significantly, with square-root cross-impact yielding better prediction performance than linear cross-impact. Interestingly, when replacing the market portfolio from Section \ref{sec:methods} with another stock that has a high 10-second return correlation with the underlying stock (\mbox{$\geq 60\%$}), we only observe modest performance gains in terms of out-of-sample $R^2$ on average, indicating that the traded volume in the whole market carries more predictive power than the traded volume in one correlated stock.  

Figure \ref{fig:MultiAsset-stocks} shows the estimated cross-impact kernels for a pair and a triplet of stocks. The left column presents a two-asset model with Coca-Cola (KO) and PepsiCo (PEP), which exhibit a 10-second return correlation of 54\%. Due to similar market capitalizations and liquidity, the cross-impact effects are nearly symmetric in both directions. The self-impact kernels for KO and PEP exhibit a decay pattern consistent with a power-law, reflecting the similar characteristics of these stocks.

The right column illustrates a three-asset model for ConocoPhillips (COP), Chevron (CVX), and Exxon Mobil (XOM), which have pairwise 10-second return correlations of 61\%, 62\%, and 67\%, respectively. Cross-impact and self-impact kernels are calculated highlighting the liquidity distribution. The strongest cross-impact is observed from the more liquid XOM to the less liquid CVX and COP, while impacts in the reverse direction are considerably weaker. This demonstrates the liquidity gradient from XOM to CVX and COP.

\begin{figure}
    \centering
    \includegraphics[width=\linewidth]{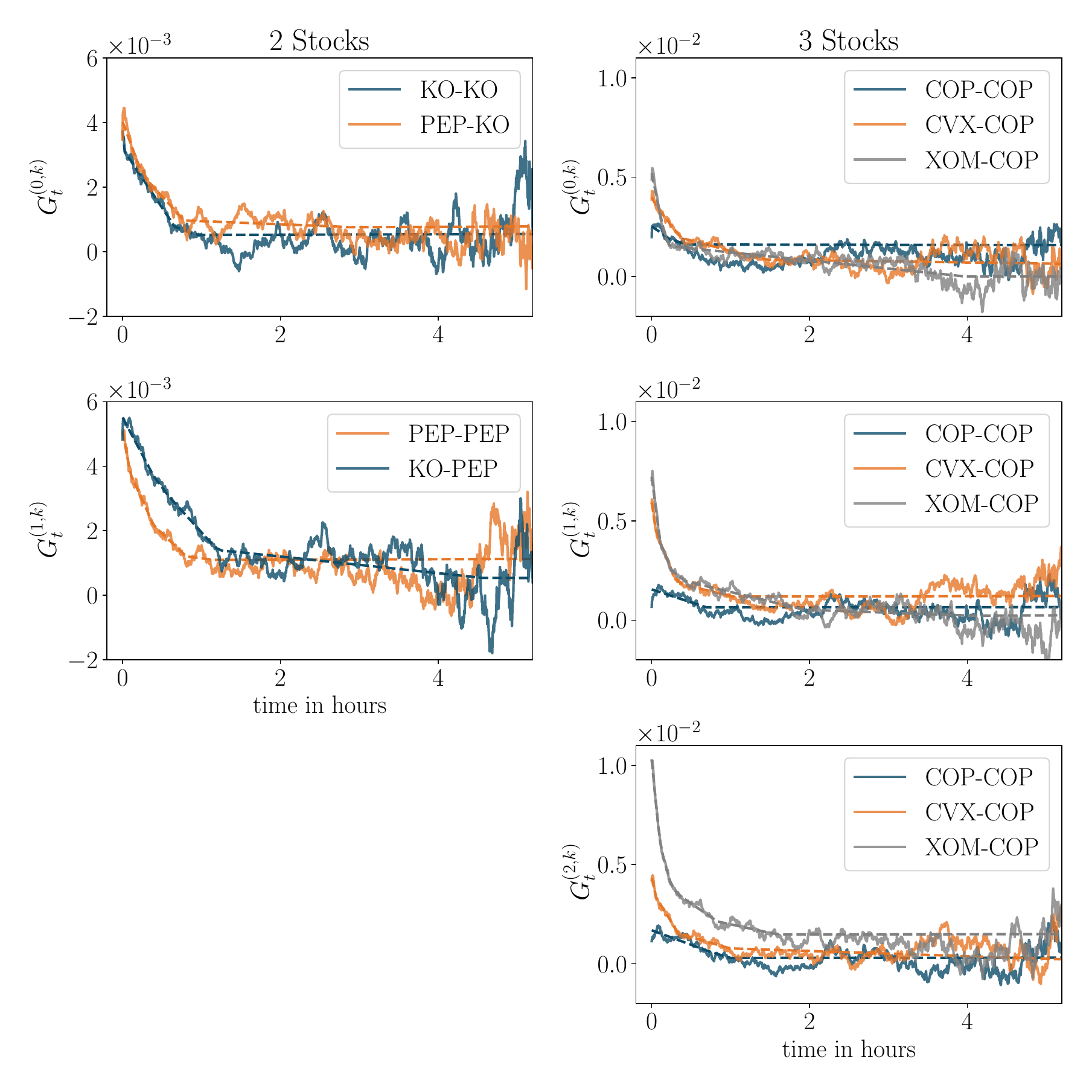}
\caption{Cross-impact kernels are estimated for correlated stocks. The solid and dashed lines depict the raw and projected kernels, respectively. Left column: Results are shown for a two-asset model, where cross-impact is estimated from Coca-Cola (KO) (blue) to itself (above) or to PepsiCo (PEP) (below) and from PEP (orange) to KO (above) or on itself (below). The cross-impact effects are nearly symmetric in both directions and impact decays. Right column: For the three-asset model, cross- and self-impact kernels are similarly estimated across the largest three US oil companies ConocoPhillips (COP), Chevron (CVX), and Exxon Mobil (XOM) (gray). As market capitalization and liquidity decrease from XOM to CVX to COP, the strongest cross-impact is observed flowing from the more liquid XOM into the less liquid CVX and COP, while, conversely, impacts in the opposite direction are markedly weaker.} 
    \label{fig:MultiAsset-stocks}
\end{figure}

\appendix
\section{Robustness Check}
In Section~\ref{sec:empirics-agg}, we aggregated the events using a bin size of 10 seconds. Using the same collection of stocks, we now increase the bin size to 5 minutes to make the aggregate impact analysis comparable to the metaorder impact analysis, 72 data points per trading day after removing the last half-hour. Table \ref{tab:model-comparison-5min}  presents a summary of the in- and out-of-sample performance for different decay kernels using a bin size of 5 minutes.

Compared to the analysis with 10-second bins (see Table \ref{tab:model-comparison}), predictive power declines when using larger bins, likely due to reduced accuracy from data aggregation. However, in comparison to the metaorder impact case (see Section \ref{sec:empirics}), predictive power improves. This enhancement is due to the fact that, for aggregated impact, the signed volume of the entire market serves as a stronger explanatory variable than the signed volume of a metaorder, which represents only a minor perturbation of the market's total order flow.

\begin{table}[htb]
    \centering
    \begin{tabular}{||c | c | c|c | c ||}
    \hline
    Propagator & Concavity $c_S$ &  Horizon & IS $R^2$    & OOS $R^2$ \\
    
    \hline
    \hline
    1-EXP & 1 & 5 & 14.60\% & 9.96\%  \\
    2-EXP & 1 & 5 & 14.95\% & 9.86\% \\
    POWER & 1 & 5 & 14.70\% & 10.01\%\\
    RAW & 1 & 5 & 12.17\% & 7.58\% \\
    PROJ & 1 & 5 & 14.56\%& 9.93\% \\
    1-EXP & 0.5 & 5 & 18.16\% & 17.25\% \\
    2-EXP & 0.5 & 5 & 18.51\% & 17.30\% \\
    POWER & 0.5 & 5 & 18.21\% & 17.35\% \\
    RAW & 0.5 & 5 & 16.56\% & 14.94\% \\
    PROJ & 0.5 & 5 & 18.05\% &17.18\% \\
    \hline
    \end{tabular}
    \caption{In- and out-of-sample performance of the linear and square-root parametric models \mbox{1-EXP}, 2-EXP, POWER and the nonparametric models RAW, PROJ for a bin size of 5 minutes.}
    \label{tab:model-comparison-5min}
     \end{table}

\section{Proofs}\label{appendix:proof}
We now provide the proof of Theorem~\ref{thm:main}, which generalizes the proof of Theorem~2.14 in \cite{neuman2023offline} to the concave multi-asset framework. 
\begin{proof}[Proof of Theorem~\ref{thm:main}]
The proof is similar to the proof of Theorem~2.14 in \cite{neuman2023offline}. Namely, let $(\mathcal{F}_n)_{n \geq 0}$ be the filtration from Definition~\ref{def:dataset}. Then $\boldsymbol{U}^{(n)}$ from \eqref{eq:Un} is $\mathcal{F}_{n-1}$-measurable and $\boldsymbol{y}^{(n)}$ from \eqref{eq:yn} is $\mathcal{F}_{n}$-measurable. Define $X:=\R^{M d^2}$, $Y:=\R^{M d}$, and 
$$
A_n:X\to Y, \quad A_n(v):=\boldsymbol{U}^{(n)}v, \quad n=1,\ldots, N.
$$
Then, recalling \eqref{eq:returns-equation} and Assumption~\ref{assumption}, we get from Theorem~4.5 in \cite{neuman2023offline} that for all $\lambda>0$,
\begin{equation}
\begin{aligned}
&\left\|{W_{N,\lambda}^{1/2} }\big( \boldsymbol{G}_{N,\lambda} - \boldsymbol{G}^* \big) \right\|_{\R^{M d^2}}\\
&\leq R \left( 2 \log \left( \frac{\det(\lambda^{-1}W_{N,\lambda})}{\delta^2}  \right) \right)^{1/2} 
+ \lambda \left\| {W^{-1/2}_{N,\lambda}} \boldsymbol{G}^* \right\|_{\R^{M d^2}}\\
&= R \left( 2 \log \left( \frac{\det(W_{N,\lambda})}{\delta^2\lambda^{M d^2}}  \right) \right)^{1/2} 
+ \lambda \left\| {W^{-1/2}_{N,\lambda}} \boldsymbol{G}^* \right\|_{\R^{M d^2}}.
\end{aligned}
\end{equation}
This proves the desired bound.
\end{proof}
Next, we give the proof of Theorem \ref{thm:price-manipulation}, which extends Proposition 1 of \cite{gatheral.al.12} to the discrete-time framework.
\begin{proof}[Proof of Theorem \ref{thm:price-manipulation}]
Since $h$ is continuous, odd, and not linear, there exist $a,b\in\mathbb{R}$ such that
\be\label{eq:s}
s:=h(a)+h(b)+h(-a-b)\neq 0.
\ee
Otherwise, if $h(a)+h(b)+h(-a-b)=0$ for all $a,b\in\R$, then due to the oddness the identity becomes $h(a+b)=h(a)+h(b)$ for all $a,b\in\R$, which gives a contradiction as the continuity of $h$ then yields $h(x)=qx$ for some $q\in\R$. Moreover, using the oddness again, we can assume $s<0$ without loss of generality.
Fix such $a,b$ and set $x_0:=a$, $x_1:=b$, $x_2=:-a-b$, and $x_3=\ldots=x_{M-1}=0$. Take $t_0=\ldots=t_{M-1}=t^*$. Then
\be\label{eq:S}
S=\sum_{i,j=0}^{M-1} x_i H(t_i,t_j) h(x_j)
=H(t^*,t^*)\Big(\sum_{i=0}^{M=1} x_i\Big)\Big(\sum_{j=0}^{M-1} h(x_j)\Big)=H(t^*,t^*)\cdot 0\cdot s=0,
\ee
with $H(t^*,t^*)>0$ and $s<0$ defined in \eqref{eq:s}. Next, replace $x_0$ by $x_0+\eps$ for some $\eps>0$, so that $\sum_{i=0}^{M-1} x_i=\eps$. By the fact that $s< 0$ and the continuity of $h$ at $x_0$, for sufficiently small $\eps$ we have 
\be\label{eq:sum}
\sum_{j=0}^{M-1} h(x_j)=s+\big(h(x_0+\varepsilon)-h(x_0)\big)<0.
\ee
Hence, \eqref{eq:S} and \eqref{eq:sum} yield $S<0$. Finally, by the continuity of $H$ in $(t^*,t^*)$, choosing the distinct points $t_i:=t^*+i\delta$ for sufficiently small $\delta>0$ preserves negativity of $S$. The second part of the theorem then follows from choosing $H(t,s)=G(|t-s|)$ for all $t,s\in[0,T]$.
\end{proof}
\appendix

\end{document}